\newlist{inlinelist}{enumerate*}{1}
\setlist*[inlinelist,1]{%
  label=(\roman*),
}
\newcommand{\ifempty}[3]{%
  \ifthenelse{\isempty{#1}}{#2}{#3}%
}
\newcommand{\ifdots}[3]{%
  \ifthenelse{\equal{#1}{...}}{#2}{#3}%
}
\newcommand{\hidden}[1]{}
\newcommand{\mypar}[1]{\paragraph*{\textbf{#1}}}
\newcommand{\keyterm}[1]{\textbf{\emph{#1}}}%
\newcommand{\Real}[1]{\mathrm{Real}}
\newcommand{\eg}{e.g.\@\xspace}
\newcommand{\ie}{i.e.\@\xspace}
\newcommand{\wrt}{w.r.t.\@\xspace}
\newcommand{\emptyseq}{\varepsilon}
\renewcommand{\epsilon}{\varepsilon}
\newcommandx{\yaHelper}[2][1=\empty]{%
\ifthenelse{\equal{#1}{\empty}}%
  { \ensuremath{ \scriptstyle{ #2 } } } 
  { \raisebox{ #1 }[0pt][0pt]{ \ensuremath{ \scriptstyle{ #2 } } } }  
}
\newcommandx{\yrightarrow}[4][1=\empty, 2=\empty, 4=\empty, usedefault=@]{%
  \ifthenelse{\equal{#2}{\empty}}
  { \xrightarrow{ \protect{ \yaHelper[ #4 ]{ #3 } } } } 
  { \xrightarrow[ \protect{ \yaHelper[ #2 ]{ #1 } } ]{ \protect{ \yaHelper[ #4 ]{ #3 } } } } 
}
\newcommand{\myxrightarrow}[1]{\mathrel{\raisebox{-1pt}{$\yrightarrow{#1}[-2pt]$}}}
\newcommand{\BTC}{\textup{%
  \leavevmode
  \vtop{\offinterlineskip 
    \setbox0=\hbox{B}%
    \setbox2=\hbox to\wd0{\hfil\hskip-.03em
    \vrule height .3ex width .15ex\hskip .08em
    \vrule height .3ex width .15ex\hfil}
    \vbox{\copy2\box0}\box2}}\xspace}
\def\pmvColor{\color{ForestGreen}}
\newcommand{\pmvFmt}[1]{{\pmvColor{\sf{#1}}}} 
\newcommand{\pmv}[2][]{\pmvFmt{#2}_{\pmvColor{#1}}\xspace}
\newcommand{\pmvA}[1][]{\pmv[{#1}]{A}}
\newcommand{\pmvM}[1][]{\pmv[{#1}]{M}} 
\newcommand{\supply}[1]{\mathit{sply}_{#1}}
\newcommand{\res}[1][]{\mathit{res}_{#1}}
\newcommand{\resDeposit}[1][]{{\color{gray}\textsf{DF}}r_{#1}}
\newcommand{\resSwap}[1][]{{\color{gray}\textsf{SF}}r_{#1}}
\newcommand{\gain}[3][]{\mathit{G}_{#2}(\ifempty{#1}{}{{#1},}{#3})}
\newcommand{\mev}[2]{\mathit{MEV}({#1},{#2})}
\newcommand{\swapTx}[4][]{\txX^{#1}({#2},{#3}\ifempty{#4}{}{,{#4}})}
\newcommand{\swapVal}[5][]{\valW[{#1}]^{#2}\ifempty{#3}{}{({#3},{#4}\ifempty{#5}{}{,{#5}})}}
\newcommand{\frontSwapTx}[4][]{{\color{gray}\textsf{SF}}\txX^{#1}({#2},{#3}\ifempty{#4}{}{,{#4}})}
\newcommand{\frontSwapVal}[5][]{{\color{gray}\textsf{SF}}\valW[{#1}]^{#2}({#3},{#4}\ifempty{#5}{}{,{#5}})}
\newcommand{\frontDepTx}[4][]{{{\color{gray}\textsf{DF}}}\txX^{#1}({#2},{#3}\ifempty{#4}{}{,{#4}})}
\newcommand{\frontDepVal}[5][]{{\color{gray}\textsf{DF}}\valW[{#1}]^{#2}({#3},{#4}\ifempty{#5}{}{,{#5}})}
\newcommand{\swapDir}[1][]{\mathit{d}_{#1}}
\def\txColor{\color{MidnightBlue}}
\newcommand{\txFmt}[1]{{\txColor{\sf #1}}}
\newcommand{\tx}[2][]{\txFmt{#2}_{\txColor{#1}}}
\newcommand{\txT}[1][]{\tx[#1]{T}}
\newcommand{\txX}[1][]{\tx[#1]{X}}
\newcommand{\txTi}[1][]{\txFmt{T'_{\txColor{{#1}}}}}
\newcommand{\TxT}[1][]{{\txColor{\tx[#1]{\mathcal{X}}}}}
\newcommand{\TxTi}[1][]{{\txColor{\tx[#1]{\mathcal{X}'}}}}
\newcommand{\bcB}[1][]{{\txColor{\lambda_{#1}}}}
\newcommand{\bcBi}[1][]{{\txColor{\lambda'_{#1}}}}
\newcommand{\mmid}{\,\|\,}
\newcommand{\irule}[2]{\dfrac{#1}{#2}}
\newcommand{\nrule}[1]{{\scriptsize \textsc{#1}}}
\newcommand{\dom}[1]{\operatorname{dom} {#1}}
\newcommand{\RNN}{\mathbb{R}_0^{+}}
\newcommand{\bind}[2]{\nicefrac{#2}{#1}}
\newcommand{\setenum}[1]{\{#1\}}
\newcommand{\emptymset}{[]}
\newcommand{\msetenum}[1]{\lbrack{#1}\rbrack}
\newcommand{\msetcup}{+}
\definecolor{LightGrey}{rgb}{0.95,0.95,0.95}
\definecolor{keyword}{HTML}{7F0055}
\definecolor{MidnightBlue}{RGB}{0,103,149}
\definecolor{ForestGreen}{RGB}{34,139,34}
\def\tokColor{\color{magenta}}
\newcommand{\tokFmt}[1]{{\tokColor{#1}}}
\newcommand{\ETH}{\ensuremath{\tokT[0]}\xspace}
\newcommand{\WETH}{\ensuremath{\tokT[1]}\xspace}
\newcommand{\tokT}[1][]{\tokFmt{\tau_{#1}}}
\newcommand{\tokTi}[1][]{\tokFmt{\tau'_{#1}}}
\newcommand{\TokU}[1][]{\tokFmt{\mathbb{T}_{#1}}} 
\newcommand{\ammDepositOp}{{\txColor{\sf dep}}}
\newcommand{\ammRedeemOp}{{\txColor{\sf rdm}}}
\newcommand{\ammSwapOp}{{\txColor{\sf swap}}}
\newcommand{\ammSwapLOp}{{\txColor{\sf swap}^0}}
\newcommand{\ammSwapROp}{{\txColor{\sf swap}^1}}
\newcommand{\ammSwapParOp}[1]{{\txColor{\sf swap}^{#1}}}
\newcommand{\actAmmDeposit}[5]{\ifempty{#1}{}{{#1}:}\ammDepositOp({#2}:{#3},{#4}:{#5})}
\newcommand{\actAmmSwapL}[5]{\ifempty{#1}{}{{#1}:}\ammSwapLOp({#2}:{#3},{#4}:{#5})}
\newcommand{\actAmmSwapR}[5]{\ifempty{#1}{}{{#1}:}\ammSwapROp({#2}:{#3},{#4}:{#5})}
\newcommand{\actAmmSwap}[6]{\ifempty{#1}{}{{#1}:}\ammSwapParOp{#2}({#3}:{#4},{#5}:{#6})}
\newcommand{\actAmmRedeem}[2]{\ifempty{#1}{}{{#1}:}\ammRedeemOp({#2})}
\newcommand{\tokBal}[1][]{\sigma_{#1}} 
\newcommand{\tokBali}[1][]{\sigma'_{#1}} 
\newcommand{\tokBalii}[1][]{\sigma''_{#1}} 
\newcommand{\wal}[2]{{#1}[{#2}]}
\newcommand{\walA}[2][]{\wal{\pmvA[#1]}{#2}}
\newcommand{\walM}[2][]{\wal{\pmvM[#1]}{#2}}
\newcommand{\txType}[1]{\mathit{type}\ifempty{#1}{}{({#1})}}
\newcommand{\txUsr}[1]{\mathit{usr}({#1})}
\newlength\replength
\newcommand\repfrac{.1}
\newcommand\rulewidth{.6pt}
\newcommand\tdashfill[1][\repfrac]{\cleaders\hbox to \replength{%
  \smash{\rule[\arraystretch\ht\strutbox]{\repfrac\replength}{\rulewidth}}}\hfill}
\newcommand\tdotfill[1][\repfrac]{\cleaders\hbox to \replength{%
  \smash{\raisebox{\arraystretch\dimexpr\ht\strutbox-.1ex\relax}{.}}}\hfill}
\newcommand{\val}[2][]{#2_{#1}} 
\newcommand{\valV}[1][]{\val[#1]{v}}
\newcommand{\valVi}[1][]{\val[#1]{v'}}
\newcommand{\valW}[1][]{\val[#1]{w}}
\newcommand{\confG}[1][]{\Gamma_{#1}}
\newcommand{\confGi}[1][]{\Gamma'_{#1}}
\newcommand{\confGii}[1][]{\Gamma''_{#1}}
\newcommand{\confGiii}[1][]{\Gamma'''_{#1}}
\newcommand{\confGiiii}[1][]{\Gamma''''_{#1}}
\newcommand{\ammR}[1][]{r_{#1}} 
\newcommand{\ammRi}[1][]{r'_{#1}}
\newcommand{\exchOInit}{{\it P}}
\newcommand{\exchO}[2][]{{\it P}_{#1}({#2})}
\renewcommand\paragraph{\@startsection{paragraph}{4}{\z@}%
  {2.25ex \@plus 1ex \@minus .2ex}%
  {-0.75em}%
  {\normalfont\normalsize\bfseries}}
\begin{document}
\title{Maximizing Extractable Value from \\Automated Market Makers}

\iftoggle{anonymous}{}{
\author{Massimo Bartoletti\inst{1},
James Hsin-yu Chiang\inst{2},
Alberto {Lluch Lafuente}\inst{2}}
}

\institute{
Universit\`a degli Studi di Cagliari, Cagliari, Italy
\and
Technical University of Denmark, DTU Compute, Copenhagen, Denmark
}

\maketitle

\begin{abstract}
  Automated Market Makers (AMMs) are decentralized applications
  that allow users to exchange crypto-tokens
  without the need for a matching exchange order.
  AMMs are one of the most successful DeFi use cases:
  indeed, major AMM platforms 
  process a daily volume of transactions worth USD billions.
  Despite their popularity, AMMs are well-known to suffer from
  transaction-ordering issues:
  adversaries can influence the ordering of user transactions,
  and possibly front-run them with their own,
  to extract value from AMMs, to the detriment of users.
  We devise an effective procedure to construct a strategy 
  through which an adversary can \emph{maximize} the value extracted 
  from user transactions.
\end{abstract}

\keywords{miner extractable value, front-running, decentralized finance}

\newcommand{\uniswapsupply}{\$5.45B\xspace}
\newcommand{\curvesupply}{\$7.5B\xspace}
\newcommand{\uniswapvolumedaily}{\$860M\xspace}
\newcommand{\curvevolumedaily}{\$290M\xspace}
\newcommand{\statsdate}{May 2021\xspace}

\section{Introduction} \label{sec:intro}

Decentralized finance (DeFi) is emerging as
an alternative to traditional finance,
boosted by blockchains, crypto-tokens and smart contracts~\cite{Werner21sok}.
\emph{Automated Market Makers (AMMs)}
--- one of the main DeFi applications ---
allow users to exchange crypto-tokens 
without the need to find another party wanting to participate in the exchange.
%
Major AMM platforms 
like \eg Uniswap, Curve Finance, and SushiSwap, hold dozens of billions
of USD and process hundreds of millions
worth of transactions daily
\cite{uniswapstats,curvestats,sushistats}.

AMMs are sensitive to \emph{transaction-ordering attacks},
where adversaries who can influence the
ordering of transactions in the blockchain
exploit this power to \emph{extract value} from user transactions
\cite{Daian19flash,Eskandari19sok,Qin21quantifying,Zhou21high}.
We illustrate this kind of attacks through a minimal example.
Assume a Uniswap-like AMM
holding 100 units of a crypto-token \ETH and 100 units of another token~\WETH,
and assume that both tokens have the same price in the reference currency
(say, USD 1,000).
Now, suppose that user $\pmvA$ wants to swap
20 units of \ETH in her wallet for at least 15 units of \WETH.
This requires to append to the blockchain a transaction of the form
\mbox{$\actAmmSwapL{\pmvA}{20}{\ETH}{15}{\WETH}$},
where the prefix $\pmvA$ indicates the wallet involved in the transaction,
$\ammSwapOp$ is the called AMM function, and
the superscript $0$ indicates the swap direction,
\ie deposit \mbox{$20:\ETH$} to receive back at least \mbox{$15:\WETH$}
(a superscript $1$ would indicate the opposite direction).
In a \emph{constant-product} AMM platform like Uniswap,
the actual amount of \WETH transferred to $\pmvA$ must be
such that the product between the AMM reserves remains constant
before and after a swap.

Now, suppose that an adversary $\pmvM$ (possibly a miner)
observes $\pmvA$'s transaction in the txpool,
and appends to the blockchain the following \emph{sandwich}:
\[
  \actAmmSwapL{\pmvM\!}{5.9}{\ETH}{5.5}{\WETH}
  \;\;
  \actAmmSwapL{\pmvA\!}{20}{\ETH}{15}{\WETH}
  \;\;
  \actAmmSwapR{\pmvM\!}{25.9}{\ETH}{20.6}{\WETH}
\]
where the last transaction is in the opposite direction, \ie
$\pmvM$ sends \mbox{$20.6:\WETH$}
to receive at least \mbox{$25.9:\ETH$}.
As a result, $\pmvA$ only yields the \emph{minimum} amount of
$15:\tokT[1]$ in return for $20:\tokT[0]$.
This implies that USD 5,000 have been gained by $\pmvM$
and lost by $\pmvA$.
This has been called \emph{Miner Extractable Value} (MEV)~\cite{Daian19flash}.

Recent works study this and other kinds of attacks
to AMMs~\cite{Daian19flash,Qin21quantifying,Zhou21discovery,Zhou21high}:
however, all these approaches are preeminently \emph{empirical},
as they focus on the definition of heuristics to extract value from AMMs,
and on their evaluation in the wild.
To the best of our knowledge, a general solution to obtain \emph{optimal}
MEV is still missing,
even in the special case of constant-product AMMs.

To exemplify a case where prior approaches fail to extract optimal MEV,
consider the following set of user transactions,
containing a swap of \ETH for \WETH,
a deposit of units of \ETH and \WETH,
and a redeem of units of minted (liquidity) tokens:
\[
  \{\quad
  \actAmmSwapL{\pmvA}{40}{\ETH}{35}{\WETH},\;
  \actAmmDeposit{\pmvA}{30}{\ETH}{40}{\WETH},\;
  \actAmmRedeem{\pmvA}{10:(\ETH,\WETH)}
\quad\}
\]

Here, both the $\ammSwapOp$ and the $\ammDepositOp$ transactions
would be rejected.
For instance, the constant-product invariant dictates that
\mbox{$40:\ETH$} sent by the user swap in the initial AMM state
$(100:\tokT[0],100:\tokT[1])$ will return exactly \mbox{$28.6:\WETH$};
since the $\ammSwapOp$ transaction requires \mbox{$35:\WETH$},
it would be discarded.
The known heuristics here fail to extract any value.
Even considering only the $\ammSwapOp$,
the sandwich would not be profitable for $\pmvM$,
since it requires the \emph{same} direction
for $\pmvM$'s and $\pmvA$'s $\ammSwapOp$
(offer \ETH to obtain \WETH),
making $\pmvA$'s $\ammSwapOp$ not enabled.
Further, the known heuristics only operate on $\ammSwapOp$ actions,
neglecting user deposits and redeems.
%
This paper proposes a layered construction to extract the \emph{maximum value}
from all user transactions,
through a multi-layer sandwich that we call \emph{Dagwood sandwich}.
In our example, $\pmvM$'s strategy would be to fire the following
three-layer sandwich:
\begin{align*}
  & \actAmmSwapR{\pmvM}{11}{\ETH}{13}{\WETH}
  \quad
  \actAmmSwapL{\pmvA\,}{40}{\ETH}{35}{\WETH}
  \\
  & \actAmmSwapR{\pmvM}{42}{\ETH}{38}{\WETH}
    \quad
    \actAmmDeposit{\pmvA}{30}{\ETH}{40}{\WETH }
  \\
  & \actAmmSwapL{\pmvM}{18}{\ETH}{21}{\WETH}
\end{align*}

The first transaction is a $\ammSwapOp$ in the opposite direction
(\ie, pay \WETH to get~\ETH) \wrt the subsequent user $\ammSwapOp$,
unlike in the classical sandwich heuristic.
$\pmvM$'s second $\ammSwapOp$ enables $\pmvA$'s deposit;
the final $\ammSwapOp$ is an arbitrage move \cite{ammTheory}.
The user redeem is dropped, since it would negatively contribute to
$\pmvM$'s profit.
By firing the transaction sequence above,
$\pmvM$ can extract approx.\ USD 5,700 from $\pmvA$,
improving over $\ammSwapOp$-only attacks, that would only extract USD 5,000.

\mypar{Contributions}

To the best of our knowledge, this work is the first to formalise
the \emph{MEV game} for AMMs (\Cref{sec:mining-game}),
and the first to effectively construct optimal solutions
which attack all types of transactions
supported by constant-product AMMs (\Cref{sec:mining-solution}).
We discuss in~\Cref{sec:conclusions} the applicability of our
technique in the wild.
\iftoggle{arxiv}
         {The proofs of our statements are in \Cref{sec:proofs}.}
         {The proofs of our statements are in~\cite{BCL21arxiv}.}


\section{Automated Market Makers}
\label{sec:model}


We assume a set  $\TokU[0]$ of \keyterm{atomic token types}
(ranged over by $\tokT,\tokTi,\ldots$),
representing native cryptocurrencies
and application-specific tokens.
We denote by $\TokU[1] = \TokU[0] \times \TokU[0]$
the set of \keyterm{minted token types},
representing shares in AMMs.
In our model, tokens are \emph{fungible},
\ie individual units of the same type are interchangeable.
In particular, amounts of tokens of the same type
can be split into smaller parts,
and two amounts of tokens of the same type can be joined.
We use $\valV, \valVi, \ammR, \ammRi$ to range over
nonnegative real numbers ($\RNN$), and
we write \mbox{$\ammR:\tokT$} to denote $\ammR$ units
of token type $\tokT \in \TokU = \TokU[0] \cup \TokU[1]$.

We model the \keyterm{wallet} of a user $\pmvA$ as a term
$\wal{\pmvA}{\tokBal}$, where the partial map
$\tokBal \in \TokU \rightharpoonup \RNN$
represents $\pmvA$'s token holdings,
and write $\wal{\pmvA}{\_}$ if the wallet balance is clear from context.
We denote with $\dom(\tokBal)$ the domain of $\tokBal$.
An \keyterm{AMM}
is a pair of the form \mbox{$(\ammR[0]:\tokT[0],\ammR[1]:\tokT[1])$},
representing the fact that the AMM is holding
$\ammR[0]$ units of $\tokT[0]$ and $\ammR[1]$ units of $\tokT[1]$.
We denote by $\res[{\tokT[0],\tokT[1]}](\confG)$
the \emph{reserves} of $\tokT[0]$ and $\tokT[1]$ in $\confG$,
\ie $\res[{\tokT[0],\tokT[1]}](\confG) = (\ammR[0],\ammR[1])$
if $(\ammR[0]:\tokT[0],\ammR[1]:\tokT[1])$ is in $\confG$.


A \keyterm{state} is a composition of wallets and AMMs,
represented as a term:
\[
\wal{\pmvA[1]}{\tokBal[1]} \mid \cdots \mid \wal{\pmvA[n]}{\tokBal[n]}
\mid
(\ammR[1]:\tokT[1],\ammRi[1]:\tokTi[1])
\mid \cdots \mid
(\ammR[k]:\tokT[k],\ammRi[k]:\tokTi[k])
\]
where:
\begin{inlinelist}
\item all $\pmvA[i]$ are distinct,
\item
  the token types in an AMM are \emph{distinct}, and
\item
  distinct AMMs cannot hold exactly the same token types.
\end{inlinelist}
Note that two AMMs can have a common token type $\tokT$,
as in \mbox{$(\ammR[1]:\tokT[1],\ammR:\tokT) \mid (\ammRi:\tokT,\ammR[2]:\tokT[2])$},
thus enabling indirect trades between token pairs
not directly provided by any AMM.
We use $\confG, \confGi, \ldots$ to range over states.
For a base term $Q$ (either wallet or AMM), we write $Q \in \confG$
when $\confG = Q \mid \confGi$, for some $\confGi$,
where we assume that two states are equivalent
when they contain the same base terms.


We define the \keyterm{supply} of a token type $\tokT$ in a state $\confG$
as the sum of the balances of $\tokT$ in all the wallets and the AMMs
occurring in $\confG$.
Formally:
\begin{equation*}
    \supply{\tokT}(\walA{\tokBal}) =
    \begin{cases}
      \tokBal(\tokT) & \text{if $\tokT \in \dom(\tokBal)$} \\
      0 &\text{otherwise }
    \end{cases}
          \quad
    \supply{\tokT}(\ammR[0]:\tokT[0],\ammR[1]:\tokT[1]) =
    \begin{cases}
      \ammR[i] & \text{if $\tokT = \tokT[i]$} \\
      0 & \text{otherwise}
    \end{cases}
\end{equation*}
and the supply of $\tokT$ in $\confG \mid \confGi$
is the summation $\supply{\tokT}(\confG) + \supply{\tokT}(\confGi)$.


We model the interaction between users and AMMs
as a transition system between states.
A transition $\confG \myxrightarrow{\txT} \confGi$
represents the evolution of the state $\confG$ into $\confGi$ upon the
execution of the \keyterm{transaction} $\txT$.
The possible transactions are:
\begin{itemize}

\item \mbox{$\actAmmDeposit{\pmvA}{\valV[0]}{\tokT[0]}{\valV[1]}{\tokT[1]}$},
  which allows $\pmvA$ to \keyterm{deposit}
  \mbox{$\valV[0]:\tokT[0]$} and \mbox{$\valV[1]:\tokT[1]$}
  to an AMM, receiving
  in return units of the minted token $(\tokT[0],\tokT[1])$.

\item \mbox{$\actAmmSwap{\pmvA}{d}{\valV[0]}{\tokT[0]}{\valV[1]}{\tokT[1]}$}
  with $d \in \setenum{0,1}$,
  which allows $\pmvA$ to \keyterm{swap} tokens,
  \ie transfer $\valV[d]:\tokT[d]$ to an AMM,
  and receive in return \emph{at least} $\valV[1-d]:\tokT[1-d]$.

\item \mbox{$\actAmmRedeem{\pmvA}{\valV:\tokT}$},
  which allows to $\pmvA$ \keyterm{redeem}
  $\valV$ units of minted token $\tokT = (\tokT[0],\tokT[1])$
  from an AMM, receiving in return units of the atomic tokens
  $\tokT[0]$ and $\tokT[1]$.

\end{itemize}


\noindent
We now formalise the one-step relation $\myxrightarrow{\txT}$
through rewriting rules, inspired by~\cite{ammTheory}.
We use the standard notation $\tokBal \setenum{\bind{x}{v}}$
to update a partial map $\tokBal$ at point $x$:
namely, $\tokBal \setenum{\bind{x}{v}}(x) = v$, while
$\tokBal \setenum{\bind{x}{v}}(y) = \tokBal(y)$ for $y \neq x$.
For a partial map $\tokBal \in \TokU \rightharpoonup \RNN$,
a token type $\tokT \in \TokU$ and a partial operation
$\circ \in \RNN \times \RNN \rightharpoonup \RNN$,
we define the partial map $\tokBal \circ \valV:\tokT$
(updating $\tokT$'s balance in $\tokBal$ by $\valV$) as follows:
\begin{equation*} \label{eq: balance update}
  \tokBal \circ \valV:\tokT = \begin{cases}
    \tokBal\setenum{\bind{\tokT}{\tokBal(\tokT) \;\circ\; \valV}}
    & \text{if $\tokT \in \dom{\tokBal}$ and $\tokBal(\tokT) \circ \valV \in \RNN$}
    \\
    \tokBal\setenum{\bind{\tokT}{\valV}}
    & \text{if $\tokT \not\in \dom{\tokBal}$}
  \end{cases}
\end{equation*}


\mypar{Deposit}

Any user can create an AMM for a token pair $(\tokT[0],\tokT[1])$,
provided that such an AMM is not already present in the state.
This is achieved by the transaction
$\actAmmDeposit{\pmvA}{\valV[0]}{\tokT[0]}{\valV[1]}{\tokT[1]}$,
through which $\pmvA$ transfers $\valV[0]:\tokT[0]$ and $\valV[1]:\tokT[1]$
to the new AMM.
In return, $\pmvA$ receives an amount of units
of a new token type $(\tokT[0],\tokT[1])$, which is minted by the AMM.
We formalise this behaviour by the rule:
\begin{small}
\[
\irule
{
    \tokBal(\tokT[i]) \geq \valV[i] > 0
    \;\; (i \in \setenum{0,1})
    \quad
    \tokT[0] \neq \tokT[1]
    \quad
    \tokT[0], \tokT[1] \in \TokU[0]
    \quad
    (\_:\tokT[0],\_:\tokT[1]),(\_:\tokT[1],\_:\tokT[0]) \not \in \confG
}{
  \begin{array}{l}
    \walA{\tokBal} \mid \confG
    \xrightarrow{\actAmmDeposit{\pmvA}{\valV[0]}{\tokT[0]}{\valV[1]}{\tokT[1]}}
    \walA{\tokBal - \valV[0]:\tokT[0] - \valV[1]:\tokT[1] + \valV[0]:(\tokT[0],\tokT[1])} \mid
    (\valV[0]:\tokT[0],\valV[1]:\tokT[1]) \mid
    \confG
  \end{array}
}
\nrule{[Dep0]}
\]
\end{small}


Once an AMM is created, any user can deposit tokens into it,
as long as doing so preserves the ratio of the token holdings in the AMM.
When a user deposits $\valV[0]:\tokT[0]$ and $\valV[1]:\tokT[1]$
to an existing AMM, it receives in return an amount of minted tokens
of type $(\tokT[0],\tokT[1])$.
This amount is the ratio between the deposited amount $\valV[0]$ and the
redeem rate of $(\tokT[0],\tokT[1])$ in the current state $\confG$.
This redeem rate is the ratio between the amount $\ammR[0]$ of $\tokT[0]$
stored in the AMM, and the total supply $\supply{(\tokT[0],\tokT[1])}(\confG)$
of the minted token in the state.
\begin{small}
\[
\irule{
    \tokBal(\tokT[i]) \geq \valV[i] > 0
    \;\; (i \in \setenum{0,1})
    \qquad
    \ammR[1]\valV[0] = \ammR[0]\valV[1]
    \qquad
    \valV = \frac{\valV[0]}{\ammR[0]} \cdot \supply{(\tokT[0],\tokT[1])}{(\confG)}
}
{
  \begin{array}{ll}
    \confG \; = \;
    & \walA{\tokBal}
      \; \mid \;
      (\ammR[0]:\tokT[0],\ammR[1]:\tokT[1])
      \; \mid \;
      \confGi
      \xrightarrow{\actAmmDeposit{\pmvA}{\valV[0]}{\tokT[0]}{\valV[1]}{\tokT[1]}}
    \\[4pt]
    & \walA{\tokBal - \valV[0]:\tokT[0] - \valV[1]:\tokT[1] + \valV:(\tokT[0],\tokT[1])}
      \; \mid \;
      (\ammR[0]+\valV[0]:\tokT[0], \ammR[1]+\valV[1]:\tokT[1])
      \; \mid \;
      \confGi
  \end{array}
}
\nrule{[Dep]}
\]
\end{small}
The premise $\ammR[1]\valV[0] = \ammR[0]\valV[1]$
ensures that the ratio between the reserves
of $\tokT[0]$ and $\tokT[1]$ in the AMM is preserved, \ie
\(
\nicefrac{\ammR[1]+\valV[1]}{\ammR[0]+\valV[0]} \; = \; \nicefrac{\ammR[1]}{\ammR[0]}
\).

\mypar{Swap}

Any user $\pmvA$ can swap units of $\tokT[0]$ in her wallet
for units of $\tokT[1]$ in an AMM
\mbox{$(\ammR[0]:\tokT[0],\ammR[1]:\tokT[1])$},
or \emph{vice versa} swap units of $\tokT[1]$ in the wallet
for units of $\tokT[0]$ in the AMM.
This is achieved by the transaction
\mbox{$\actAmmSwap{\pmvA}{d}{\valV[0]}{\tokT[0]}{\valV[1]}{\tokT[1]}$},
where $d \in \setenum{0,1}$ is the \emph{swap direction}.
If $d=0$ (``left'' swap),
then $\valV[0]$ is the amount of $\tokT[0]$ transferred from
$\pmvA$'s wallet to the AMM, while $\valV[1]$ is a \emph{lower bound}
on the amount of $\tokT[1]$ that $\pmvA$ will receive in return.
Conversely, if $d=1$ (``right'' swap),
then $\valV[1]$ is the amount of $\tokT[1]$ transferred from
$\pmvA$'s wallet, and $\valV[0]$ is a lower bound
on the received amount of $\tokT[0]$.
The actual amount $\valV$ of received units of $\tokT[1-d]$ must satisfy
the \keyterm{constant-product invariant}~\cite{rvammspec},
as in Uniswap~\cite{uniswapimpl}, SushiSwap~\cite{sushiswapimpl}
and other common AMMs implementations:
\[
\ammR[0] \cdot \ammR[1]
\; = \;
(\ammR[d] + \valV[d]) \cdot (\ammR[1-d] - \valV)
\]
Formally, for $d \in \setenum{0,1}$ we define:
\begin{small}
\[
\irule
{
  \begin{array}{l}
    \tokBal(\tokT[d]) \geq \valV[d] > 0
    \qquad
    \valV = \frac{\ammR[1-d] \cdot \valV[d]}{\ammR[d]+\valV[d]}
    \qquad
    0 < \valV[1-d] \leq \valV
  \end{array}
}
{\begin{array}{l}
   \walA{\tokBal}
   \mid
   (\ammR[0]:\tokT[0],\ammR[1]:\tokT[1])
   \mid
   \confG
   \xrightarrow{\actAmmSwap{\pmvA}{d}{\valV[0]}{\tokT[0]}{\valV[1]}{\tokT[1]}}
   \\[4pt]
   \walA{\tokBal - \valV[d]:\tokT[d] + \valV:\tokT[1-d]}
   \mid
   (\ammR[0]:\tokT[0],\ammR[1]:\tokT[1])
   + \valV[d]:\tokT[d] - \valV:\tokT[1-d]
   \mid
   \confG
   \hspace{-5pt}
 \end{array}
}
\nrule{[Swap]}
\]
\end{small}%

\noindent
where we define the update of the units of $\tokT$
in an AMM, for $\circ \in \setenum{+,-}$, as:
\begin{equation*}
  (\ammR[0]:\tokT[0],\ammR[1]:\tokT[1]) \circ \valV:\tokT = \begin{cases}
    (\ammR[0] \circ \valV:\tokT[0],\ammR[1]:\tokT[1])
    & \text{if $\tokT = \tokT[0]$ and $\ammR[0] \circ \valV \in \RNN$}
    \\
    (\ammR[0]:\tokT[0],\ammR[1] \circ \valV:\tokT[1])
    & \text{if $\tokT = \tokT[1]$ and $\ammR[1] \circ \valV \in \RNN$}
  \end{cases}
\end{equation*}

\mypar{Redeem}

Users can redeem units of a minted token $(\tokT[0],\tokT[1])$
for units of the underlying atomic tokens $\tokT[0]$ and $\tokT[1]$.
Each unit of $(\tokT[0],\tokT[1])$ can be redeemed for
equal fractions of $\tokT[0]$ and $\tokT[1]$ remaining in the AMM:
\begin{small}
\[
\irule{
  \begin{array}{l}
    \tokBal(\tokT[0],\tokT[1]) \geq \valV > 0
    \qquad
    \valV[0] = \valV \frac{\ammR[0]}{\supply{(\tokT[0],\tokT[1])}{(\confG)}}
    \qquad
    \valV[1] = \valV \frac{\ammR[1]}{\supply{(\tokT[0],\tokT[1])}{(\confG)}}
  \end{array}
}
{
  \begin{array}{ll}
    \confG \; = \;
    & \walA{\tokBal}
      \; \mid \;
      (\ammR[0]:\tokT[0],\ammR[1]:\tokT[1])
      \; \mid \;
      \confGi
      \xrightarrow{\actAmmRedeem{\pmvA}{\valV:(\tokT[0],\tokT[1])}}
    \\[4pt]
    & \walA{\tokBal + \valV[0]:\tokT[0] + \valV[1]:\tokT[1] - \valV:(\tokT[0],\tokT[1])}
      \; \mid \;
      (\ammR[0]-\valV[0]:\tokT[0],\ammR[1]-\valV[1]:\tokT[1])
      \; \mid \;
      \confGi
  \end{array}
}
\nrule{[Rdm]}
\]
\end{small}


\noindent
A key property of the transition system is \emph{determinism},
\ie if $\confG \myxrightarrow{\txT} \confGi$ and
$\confG \myxrightarrow{\txT} \confGii$, then the states
$\confGi$ and $\confGii$ are equivalent.
We denote with $\txType{\txT}$ the type of $\txT$
(\ie, $\ammDepositOp$, $\ammSwapOp$, $\ammRedeemOp$),
and with $\txUsr{\txT}$ the user issuing $\txT$.
For a sequence of transactions $\bcB = \txT[1] \cdots \txT[n]$, we write
$\confG \myxrightarrow{\bcB} \confGi$ whenever
there exist intermediate states $\confG[1], \ldots \confG[n-1]$ such that
\(
\confG
\myxrightarrow{\txT[1]}
\confG[1]
\myxrightarrow{\txT[2]}
\cdots
\myxrightarrow{\txT[n-1]}
\confG[n-1]
\myxrightarrow{\txT[n]}
\confGi
\).
When this happens, we say that $\bcB$ is \emph{enabled} in $\confG$,
or just $\confG \myxrightarrow{\bcB}$.
A state $\confG$ is \emph{reachable} if there exist
some $\confG[0]$ only containing wallets with atomic tokens
and some $\bcB$
such that $\confG[0] \myxrightarrow{\bcB} \confG$.

\section{The MEV game}
\label{sec:mining-game}

The model in the previous section defines how
the state of AMMs and wallets evolves upon a sequence of transactions,
but it does not specify how this sequence is formed.
We specify this as a single-player, single-round game
where the only player is an adversary $\pmvM$ who attempts to maximize its MEV.
Accordingly, we call this the \keyterm{MEV game}.
The \emph{initial state} of the game is given by a reachable state
$\confG$ (not including $\pmvM$'s wallet)
and by a finite multiset $\TxT$ of user transactions,
representing the pool of pending transactions
(also called \keyterm{txpool}).
The \emph{moves} of $\pmvM$ are pairs $(\tokBal,\bcB)$,
where $\tokBal$ is $\pmvM$'s initial balance,
and $\bcB$ is a sequence formed by (part of)
the transactions in $\TxT$, and by any number of $\pmvM$'s transactions.
We require that the sequence $\bcB$ in a move is enabled in $\confG$.
The MEV game assumes the following (see
\Cref{sec:conclusions} for a discussion thereof):
\begin{enumerate}
  \item Users balances in $\confG$ are sufficiently high to not interfere with
  the validity of any specific ordering of actions in $\TxT$.
  \item The balance $\tokBal$ of $\pmvM$ does not include minted tokens.
  \item The length of the sequence $\bcB$ is unbounded.
  \item Prices of atomic tokens are fixed throughout the game execution.
\end{enumerate}
Besides the above, some further assumptions are implied by our AMM model:
\begin{enumerate}
  \item[5.] AMMs only hold atomic tokens (this is a consequence of $\nrule{[Dep0]}$).
  \item[6.] Swap actions do not require fees (this is a consequence of $\nrule{[Swap]}$).
  \item[7.] There are no transaction fees.
  \item[8.] Interval constraints on received token amounts are modelled in swaps only.
\end{enumerate}
A \emph{solution} to the game is a move that maximizes $\pmvM$'s \emph{gain},
\ie the change in $\pmvM$'s net worth after performing
the sequence $\bcB$ from~$\confG$.
Intuitively, the net worth of a user is the overall \emph{value} of tokens
in her wallet.
To define it, we need to associate a \keyterm{price} to each token.
%
We assume that the prices of atomic tokens are given by an oracle
$\exchOInit \in \TokU[0] \rightarrow \RNN$: naturally, the MEV game solution
will need to be recomputed should the price of atomic tokens be updated.
The price $\exchO[\confG]{\tokT[0],\tokT[1]}$
of a minted token $(\tokT[0],\tokT[1])$ in a state $\confG$ is defined as follows:
\begin{equation}
  \label{eq:price:minted}
  \exchO[\confG]{\tokT[0],\tokT[1]}
  = \dfrac
  {\ammR[0] \cdot \exchO{\tokT[0]} + \ammR[1] \cdot \exchO{\tokT[1]}}
  {\supply{(\tokT[0],\tokT[1])}(\confG)}
  \quad \text{if $\res[{\tokT[0],\tokT[1]}](\confG) = (\ammR[0],\ammR[1])$}
\end{equation}

Minted tokens are priced such that
the net worth of a user is preserved when she deposits or redeems minted
tokens in her wallet.
We assume that the reserves in an AMM are
never reduced to zero in an execution, in order to preserve equality
of minted token prices between two states with equal reserves, thereby facilitating
proofs and analysis.
While our semantics of AMMs allows reserves to be emptied,
we note that this does not occur in
practice, as it would halt the operation of the respective AMM pair.
%
We define the \keyterm{net worth} of a user $\pmvA$ in a state $\confG$
such that $\walA{\tokBal} \in \confG$ as follows:
\begin{equation}
  \label{eq:net-worth:user}
  W_{\pmvA}(\confG) \; = \;
  \textstyle \sum_{\tokT \in \dom(\tokBal)}
  \tokBal(\tokT)\cdot \exchO[\confG]{\tokT}
\end{equation}
and we denote by $\gain[\confG]{\pmvA}{\bcB}$ the \keyterm{gain} of user $\pmvA$
upon performing a sequence of transactions $\bcB$ enabled in state $\confG$
(if $\bcB$ is not enabled, the gain is zero):
\begin{equation}
  \label{def:gain:A}
  \gain[\confG]{\pmvA}{\bcB}
  \; = \;
  W_{\pmvA}(\confGi) -W_{\pmvA}(\confG)
  \qquad \text{if $\confG \xrightarrow{\bcB} \confGi$}
\end{equation}
A \keyterm{rational player} is a player which,
for all initial states $(\confG,\TxT)$ of the game,
always chooses a move $(\tokBal,\bcB)$ that maximizes the function
$\gain[\walM{x} \mid \confG]{\pmvM}{\ y}$
on variables $x$ and $y$.
We define the \keyterm{miner extractable value}
in $(\confG,\TxT)$ as the gain obtained by a rational player
by applying such a solution $(\tokBal,\bcB)$, \ie:
\[
\mev{\confG}{\TxT} \; = \; \gain[\walM{\tokBal} \mid \confG]{\pmvM}{\bcB}
\]

\Cref{lma:wealth-conservation} states that
firing transactions preserves the \emph{global} net worth,
\ie the gains of some users are balanced by equal overall losses of other users.
\begin{restatable}{lemma}{wealthconservation}
  \label{lma:wealth-conservation}
  \(
  \textstyle
  \sum_{\pmvA} \gain[\confG]{\pmvA}{\txT} = 0
  \).
\end{restatable}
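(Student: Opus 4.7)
The plan is to prove the identity by case analysis on the rule used to derive $\confG \xrightarrow{\txT} \confGi$. Since atomic token prices are fixed by the oracle $\exchOInit$ and therefore independent of the state, the gain of any user decomposes into (i) the value (at oracle prices) of the net change in their atomic token holdings, plus (ii) a sum over minted token types $\tokT=(\tokT[0],\tokT[1])$ of $\tokBali(\tokT)\exchO[\confGi]{\tokT} - \tokBal(\tokT)\exchO[\confG]{\tokT}$. Two observations will be repeatedly used: by the implicit assumption that AMMs hold only atomic tokens, the sum of minted token balances across all users equals $\supply{\tokT}(\confG)$; and any minted token $\tokT$ whose underlying AMM is not touched by $\txT$ contributes $0$ to the sum since neither reserves nor supply change.

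Next I would dispatch the two easy cases, \nrule{[Dep0]} and \nrule{[Rdm]}. For \nrule{[Dep0]}, the new minted token $(\tokT[0],\tokT[1])$ has price $\exchO[\confGi]{\tokT[0],\tokT[1]} = (\valV[0]\exchO{\tokT[0]}+\valV[1]\exchO{\tokT[1]})/\valV[0]$, so the $\valV[0]$ minted units received by $\pmvA$ are worth exactly $\valV[0]\exchO{\tokT[0]}+\valV[1]\exchO{\tokT[1]}$, matching the value of the atomic tokens deposited. For \nrule{[Rdm]}, I would first show that the minted price is preserved: substituting $\valV[i]=\valV\,\ammR[i]/\supply{(\tokT[0],\tokT[1])}(\confG)$ into~\eqref{eq:price:minted} yields $\exchO[\confGi]{\tokT[0],\tokT[1]}=\exchO[\confG]{\tokT[0],\tokT[1]}$; then $\pmvA$'s atomic gain $\valV[0]\exchO{\tokT[0]}+\valV[1]\exchO{\tokT[1]}$ equals exactly $\valV\cdot\exchO[\confG]{\tokT[0],\tokT[1]}$, which is the value of the minted units burned. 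No other user's wallet or prices change, so the total is zero.

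The interesting cases are \nrule{[Dep]} and \nrule{[Swap]}. For \nrule{[Dep]}, the premise $\ammR[1]\valV[0]=\ammR[0]\valV[1]$ together with $\valV=(\valV[0]/\ammR[0])\supply{(\tokT[0],\tokT[1])}(\confG)$ can be used in a short algebraic computation to show $\exchO[\confGi]{\tokT[0],\tokT[1]}=\exchO[\confG]{\tokT[0],\tokT[1]}$, whence $\pmvA$'s gain cancels exactly as in the redeem case, and other users are again unaffected. For \nrule{[Swap]} the minted supply $\supply{(\tokT[0],\tokT[1])}$ is unchanged, so the change in minted price is purely $((\ammR[d]+\valV[d])\exchO{\tokT[d]}+(\ammR[1-d]-\valV)\exchO{\tokT[1-d]}-\ammR[0]\exchO{\tokT[0]}-\ammR[1]\exchO{\tokT[1]})/\supply{(\tokT[0],\tokT[1])}(\confG)$, which simplifies to $(\valV[d]\exchO{\tokT[d]}-\valV\,\exchO{\tokT[1-d]})/\supply{(\tokT[0],\tokT[1])}(\confG)$. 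Summing the minted contribution over all users multiplies this by $\supply{(\tokT[0],\tokT[1])}(\confG)$ (again using that AMMs hold no minted tokens), producing $\valV[d]\exchO{\tokT[d]}-\valV\,\exchO{\tokT[1-d]}$, which exactly cancels $\pmvA$'s atomic gain $-\valV[d]\exchO{\tokT[d]}+\valV\,\exchO{\tokT[1-d]}$.

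The main obstacle is bookkeeping rather than conceptual: one has to carefully separate, in each case, the user's own wallet flow from the revaluation of minted tokens held by all users (including the acting user), and verify that the assumption that AMMs hold only atomic tokens is genuinely what makes the ``global supply'' factor come out right in \nrule{[Swap]}. The price-invariance computation for \nrule{[Dep]} is the one spot where a small algebraic manipulation is required; everything else follows by direct substitution.
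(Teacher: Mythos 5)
Your proof is correct, but it is genuinely more self-contained than what the paper offers: the paper's entire proof of \Cref{lma:wealth-conservation} is a citation to Lemma~3 (preservation of net worth) of the external reference \cite{ammTheory}, with no in-line argument. Your case analysis on \nrule{[Dep0]}, \nrule{[Dep]}, \nrule{[Swap]} and \nrule{[Rdm]} checks out: the price-invariance computations for \nrule{[Dep]} and \nrule{[Rdm]} (reserves and supply scale by the same factor $1+\valV[0]/\ammR[0]$, resp.\ $1-\valV/\supply{(\tokT[0],\tokT[1])}(\confG)$) are exactly right, and the \nrule{[Swap]} case correctly isolates the revaluation $(\valV[d]\exchO{\tokT[d]}-\valV\cdot\exchO{\tokT[1-d]})/\supply{(\tokT[0],\tokT[1])}(\confG)$ per minted unit and uses the fact that AMMs hold only atomic tokens to sum it to exactly the negative of the swapper's atomic gain. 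What your route buys is that it re-derives, in passing, two facts the paper states as separate lemmas: your \nrule{[Dep]}/\nrule{[Rdm]} cases are precisely \Cref{lma:constant-wealth-step}, and your observation that the minted price is unchanged when reserves scale proportionally is the ``if'' direction of \Cref{lma:price-reserve-relation}. The cost is the bookkeeping you acknowledge; the paper avoids it by outsourcing the whole argument, at the price of leaving the reader to check that the model of \cite{ammTheory} matches the one here.
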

By using a simple inductive argument, we can extend
\Cref{lma:wealth-conservation} to sequences of transactions:
if $\confG \myxrightarrow{\bcB} \confGi$,
then the summation of the gains
$\gain[\confG]{\pmvA}{\bcB}$ over all users (including $\pmvM$) is 0.
Hence, the MEV game is zero-sum.
The following \namecref{lma:constant-wealth-step} ensures
that deposit and redeem actions do not directly affect the net worth of the user
who performs them.

\begin{restatable}{lemma}{constantwealthstep}
  \label{lma:constant-wealth-step}
  If $\txType{\txT} \in \{ \ammDepositOp, \ammRedeemOp \}$,
  then
  \(
  \textstyle
  \gain[\confG]{\txUsr{\txT}}{\txT} = 0
  \).
\end{restatable}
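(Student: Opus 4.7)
The plan is to do a case analysis on the rule used to derive $\confG \myxrightarrow{\txT} \confGi$, handling the three possibilities \nrule{[Dep0]}, \nrule{[Dep]}, and \nrule{[Rdm]}. In each case, let $\pmvA = \txUsr{\txT}$ and let $(\tokT[0],\tokT[1])$ denote the token pair touched by $\txT$. Before doing any calculation, I would make two structural observations that sharply restrict what can change in $W_{\pmvA}(\confGi) - W_{\pmvA}(\confG)$. First, atomic token prices are fixed by the oracle $\exchOInit$, so they are identical in $\confG$ and $\confGi$. Second, for every minted token type $(\tokT[0]',\tokT[1]') \neq (\tokT[0],\tokT[1])$, the corresponding AMM reserves are untouched and its total supply is unchanged (minted tokens of a different pair are neither produced nor consumed), so by the price formula its price is preserved. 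Hence the only price that can vary is $\exchO[\cdot]{\tokT[0],\tokT[1]}$.

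The key intermediate step is to show that even this price is invariant under $\txT$, \ie $\exchO[\confG]{\tokT[0],\tokT[1]} = \exchO[\confGi]{\tokT[0],\tokT[1]}$. For \nrule{[Dep]}, the premise $\ammR[1]\valV[0] = \ammR[0]\valV[1]$ gives $\ammR[1]+\valV[1] = \ammR[1](\ammR[0]+\valV[0])/\ammR[0]$, while the definition $\valV = (\valV[0]/\ammR[0]) \cdot \supply{(\tokT[0],\tokT[1])}(\confG)$ together with the fact that the $\valV$ new minted units land in $\pmvA$'s wallet yields $\supply{(\tokT[0],\tokT[1])}(\confGi) = \supply{(\tokT[0],\tokT[1])}(\confG) \cdot (\ammR[0]+\valV[0])/\ammR[0]$. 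Plugging both into~\eqref{eq:price:minted}, the factor $(\ammR[0]+\valV[0])/\ammR[0]$ cancels and the price is preserved. For \nrule{[Rdm]}, the reserves are scaled by the common factor $1 - \valV/\supply{(\tokT[0],\tokT[1])}(\confG)$, the supply is scaled by the same factor, and the price formula again yields invariance. For \nrule{[Dep0]}, one directly computes $\exchO[\confGi]{\tokT[0],\tokT[1]} = (\valV[0]\exchO{\tokT[0]} + \valV[1]\exchO{\tokT[1]})/\valV[0]$ from the definitions (there is no pre-state price to match in this case).

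Once price invariance is established, the remaining computation is a direct substitution. For \nrule{[Dep]}, $\pmvA$ loses $\valV[0]\exchO{\tokT[0]} + \valV[1]\exchO{\tokT[1]}$ in atomic value and gains $\valV \cdot \exchO[\confGi]{\tokT[0],\tokT[1]} = (\valV[0]/\ammR[0])(\ammR[0]\exchO{\tokT[0]} + \ammR[1]\exchO{\tokT[1]})$; using $\ammR[1]\valV[0] = \ammR[0]\valV[1]$ one more time, these cancel exactly. For \nrule{[Rdm]}, the gained atomic value $\valV[0]\exchO{\tokT[0]} + \valV[1]\exchO{\tokT[1]}$ equals $(\valV/\supply{(\tokT[0],\tokT[1])}(\confG))(\ammR[0]\exchO{\tokT[0]} + \ammR[1]\exchO{\tokT[1]})$, which is exactly the value $\valV \cdot \exchO[\confG]{\tokT[0],\tokT[1]}$ of the minted tokens burned. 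The \nrule{[Dep0]} case is analogous and immediate from the explicit form of the new minted price.

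The only genuine obstacle is the price-invariance step, and it is purely algebraic: it reduces to showing that the two normalisations built into the rules (the ratio constraint for deposits, and proportional redemption) are precisely what is needed to keep the weighted average defining $\exchO[\cdot]{\tokT[0],\tokT[1]}$ constant. The rest of the argument is bookkeeping; in particular, no appeal to \Cref{lma:wealth-conservation} is needed, since we are tracking a single user's wallet against AMM reserves that move in equal and opposite atomic value.
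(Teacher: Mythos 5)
Your proof is correct, but it takes a different route from the paper: the paper's own ``proof'' of this lemma is a one-line deferral to a preservation-of-net-worth lemma in the cited AMM framework (\cite{ammTheory}), with no computation carried out in the paper itself. You instead give a self-contained verification by case analysis on $\nrule{[Dep0]}$, $\nrule{[Dep]}$ and $\nrule{[Rdm]}$, and your decomposition is the right one: the only non-trivial point is indeed that $\exchO[\cdot]{\tokT[0],\tokT[1]}$ is invariant under the transition (so that minted tokens already in, or newly added to, the user's wallet are valued consistently before and after), and your algebra checks out --- for $\nrule{[Dep]}$ the reserves and the supply both scale by $(\ammR[0]+\valV[0])/\ammR[0]$, for $\nrule{[Rdm]}$ both scale by $1-\valV/\supply{(\tokT[0],\tokT[1])}(\confG)$, and in each case the scaling cancels in \eqref{eq:price:minted}. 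Note that this price-invariance step is essentially a special case of what the paper later proves as \Cref{lma:price-reserve-relation} (equal reserve ratios imply equal minted-token prices), so your argument could be shortened by invoking that lemma once it is available; conversely, your direct computation has the advantage of being checkable entirely within this paper's model rather than relying on the semantics of the cited framework matching the one defined here. Your observation that no appeal to \Cref{lma:wealth-conservation} is needed is also accurate, and in fact the logical dependency runs more naturally in the direction you chose, since the paper uses the present lemma downstream when analysing deposit and redeem inner layers.
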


Finally, we note that prices of a minted token in two states are equal if the
reserve ratio in the two states are as well.
\begin{restatable}{lemma}{pricereserverelation}
  \label{lma:price-reserve-relation}
  Let $\confG \myxrightarrow{\bcB} \confGi$, and let
  $\res[{\tokT[0],\tokT[1]}](\confG) = (\ammR[0],\ammR[1])$,
  $\res[{\tokT[0],\tokT[1]}](\confGi) = (\ammRi[0],\ammRi[1])$.
  Then, $\exchO[\confG]{\tokT[0],\tokT[1]} = \exchO[\confGi]{\tokT[0],\tokT[1]}$
  if and only if
  $\ammR[0]/\ammR[1] = \ammRi[0]/\ammRi[1]$.
\end{restatable}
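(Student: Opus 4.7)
The plan is to reduce the biconditional to a single quantitative invariant of the AMM for $(\tokT[0],\tokT[1])$ that is preserved by every transition rule, and then combine that invariant with the price formula~\eqref{eq:price:minted} to establish both directions. The candidate invariant is the quantity $\ammR[0]\ammR[1]/\supply{(\tokT[0],\tokT[1])}(\cdot)^2$, which intuitively measures the backing-per-minted-share squared. I would establish its preservation along $\confG \myxrightarrow{\bcB} \confGi$ by induction on the length of $\bcB$, with the inductive step handled by a case analysis on the rule producing the last transition.

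The three non-trivial rules each contribute a short algebraic observation. For \nrule{[Dep]}, the ratio-preserving premise $\ammR[1]\valV[0] = \ammR[0]\valV[1]$ together with the minting formula $\valV = (\valV[0]/\ammR[0])\,\supply{(\tokT[0],\tokT[1])}(\confG)$ force the new reserves and the new supply to scale by the common factor $1+\valV[0]/\ammR[0]$, leaving the invariant untouched; \nrule{[Rdm]} is analogous with a contraction factor $1-\valV/\supply{(\tokT[0],\tokT[1])}(\confG)$. For \nrule{[Swap]}, the constant-product premise directly preserves $\ammR[0]\ammR[1]$ and the minted supply is left unchanged. Transitions not touching the $(\tokT[0],\tokT[1])$ pool act trivially on both the reserves and the relevant supply.

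With this invariant in hand, the $\Leftarrow$ direction is short: setting $\rho = \ammR[0]/\ammR[1] = \ammRi[0]/\ammRi[1]$, the invariant equation reduces to $\rho\,\ammR[1]^2/s^2 = \rho\,\ammRi[1]^2/(s')^2$, which by positivity of reserves and supply forces $\ammR[1]/s = \ammRi[1]/s'$ and hence $\ammR[0]/s = \ammRi[0]/s'$. Substituting these normalized ratios into~\eqref{eq:price:minted} then yields the price equality directly, without any oracle-dependent case analysis.

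The step I expect to be the main obstacle is the $\Rightarrow$ direction. Price equality gives a single linear equation in the normalized reserves $a = \ammR[0]/s$, $b = \ammR[1]/s$ and their primed counterparts, namely $a\,\exchO{\tokT[0]} + b\,\exchO{\tokT[1]} = a'\,\exchO{\tokT[0]} + b'\,\exchO{\tokT[1]}$, and combined with the quadratic invariant $ab = a'b'$ this system admits in principle a ``reciprocal'' branch of solutions in addition to the trivial $(a,b)=(a',b')$ one: isolating the intended branch is the crux. I would attack this either via an oracle-specific argument exploiting that the price, viewed as a function of the reserve ratio at fixed invariant, has its extremum at the single point $\rho = \exchO{\tokT[1]}/\exchO{\tokT[0]}$, or via a reachability argument that tracks $(a,b)$ along $\bcB$ and rules out the reciprocal branch using a monotonicity property; pinning down this argument cleanly is where I expect to spend most of the proof effort.
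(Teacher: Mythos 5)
Your $\Leftarrow$ direction is correct, and it is essentially the paper's own proof in different packaging: the quantity $\ammR[0]\ammR[1]/\supply{(\tokT[0],\tokT[1])}(\cdot)^2$ that you propagate by induction is exactly the square of the quantity the paper keeps constant inside its ``projected minted token price'' $\exchOInit^{R}_{\confG}$, and your rule-by-rule case analysis (a common scaling factor $c$ for \nrule{[Dep]} and \nrule{[Rdm]}, product- and supply-preservation for \nrule{[Swap]}, triviality for transitions not touching the pair) is the same induction the paper performs. So that half needs no further comment.

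The $\Rightarrow$ direction, however, is not merely the hard step --- it is false, and neither of your proposed repairs can succeed because the ``reciprocal branch'' you isolate is actually reachable. Concretely, take $\exchO{\tokT[0]} = \exchO{\tokT[1]}$ and a state $\confG$ containing $(2:\tokT[0],1:\tokT[1])$ with minted supply $s$. The swap $\actAmmSwap{\pmvA}{1}{1}{\tokT[0]}{1}{\tokT[1]}$ is enabled (the received amount is $\ammR[0]\valV[1]/(\ammR[1]+\valV[1]) = 2\cdot 1/(1+1) = 1$) and leads to reserves $(1:\tokT[0],2:\tokT[1])$ with the same supply; by \eqref{eq:price:minted} the minted-token price is $3\cdot\exchO{\tokT[0]}/s$ in both states, yet the reserve ratios are $2$ and $\nicefrac{1}{2}$. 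Your own extremum observation explains why no argument can rule this out: a function of the ratio with a unique interior minimum at $\rho=\exchO{\tokT[1]}/\exchO{\tokT[0]}$ is two-to-one away from that minimum, so uniqueness of the extremum works \emph{against} injectivity rather than for it, and a single swap can carry the ratio from one side of the minimum to the other, defeating any monotonicity/reachability argument as well. Note that the paper's own proof silently establishes only the ``if'' half (constancy of the projected price at a fixed $R$ gives nothing about distinct ratios), and the lemma is only ever invoked in that direction (in the deposit-inner-layer analysis). The honest fix is therefore to weaken the statement to the $\Leftarrow$ implication --- or to add the hypothesis that both ratios lie on the same side of $\exchO{\tokT[1]}/\exchO{\tokT[0]}$ --- rather than to search for a cleverer argument for the biconditional.
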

\noindent
\section{Solving the MEV game}
\label{sec:mining-solution}

By \Cref{lma:wealth-conservation},
a move which minimizes the gain of all users but $\pmvM$
must maximize $\pmvM$'s gain, and therefore is a solution to the MEV game.
More formally, we have:

\begin{corollary} 
  \label{lma:user-wealth-minimization}
  $\gain[\confG]{\pmvM}{\bcB}$ is maximized
  iff $\gain[\confG]{\pmvA}{\bcB}$ is minimized for all $\pmvA \neq \pmvM$.
\end{corollary}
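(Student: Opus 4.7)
The plan is to derive the corollary directly from \Cref{lma:wealth-conservation} after first lifting that result from single transactions to arbitrary sequences. Since wealth conservation is stated only for one transaction, the natural first step is an induction on the length of $\bcB$: in the base case $\bcB$ is empty and every user's gain is zero, so the sum is trivially zero. For the inductive step, write $\bcB = \txT \cdot \bcBi$ with $\confG \myxrightarrow{\txT} \confGii \myxrightarrow{\bcBi} \confGi$ and use that for each user $\pmvA$ the net-worth change telescopes, i.e.\ $W_{\pmvA}(\confGi) - W_{\pmvA}(\confG) = (W_{\pmvA}(\confGii) - W_{\pmvA}(\confG)) + (W_{\pmvA}(\confGi) - W_{\pmvA}(\confGii))$. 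Summing over all $\pmvA$, the first parenthesised term vanishes by \Cref{lma:wealth-conservation} applied to $\txT$ and the second vanishes by the inductive hypothesis applied to $\bcBi$ starting from $\confGii$.

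With the generalised conservation identity $\sum_{\pmvA}\gain[\confG]{\pmvA}{\bcB} = 0$ in hand, the corollary is a one-line rearrangement: isolating the miner's term yields
\[
\gain[\confG]{\pmvM}{\bcB} \;=\; -\sum_{\pmvA \neq \pmvM}\gain[\confG]{\pmvA}{\bcB}.
\]
Since $\confG$ and the user population are fixed and only $\bcB$ is being chosen, maximising the left-hand side over admissible $\bcB$ is equivalent to minimising the aggregate non-miner gain appearing on the right-hand side, which is exactly the content of the corollary.

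The main subtlety, more a matter of careful phrasing than of genuine mathematical difficulty, is the reading of the quantifier ``for all $\pmvA \neq \pmvM$'': the equivalence really concerns the \emph{aggregate} gain $\sum_{\pmvA \neq \pmvM}\gain[\confG]{\pmvA}{\bcB}$, because one cannot in general simultaneously minimise every individual user's gain with a single $\bcB$; I would therefore make this collective reading explicit in the write-up. A secondary point to flag is that $\gain[\confG]{\pmvA}{\bcB}$ is defined to be zero when $\bcB$ is not enabled, so the optimisation is implicitly restricted to enabled sequences — the conservation identity above only makes sense under this restriction, but the restriction is the same on both sides of the ``iff'' and is thus harmless.
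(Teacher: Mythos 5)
Your proof is correct and follows essentially the same route as the paper: the paper likewise lifts \Cref{lma:wealth-conservation} to sequences by ``a simple inductive argument'' and then treats the corollary as an immediate consequence of the game being zero-sum. Your caveat that the ``for all $\pmvA \neq \pmvM$'' quantifier is really about minimising the \emph{aggregate} non-miner gain is a fair point that the paper leaves implicit.
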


The net worth $W_{\pmvA}$ of a user $\pmvA$
can be decomposed in two parts:
$W_{\pmvA}^0$, which accounts for the atomic tokens, 
and $W_{\pmvA}^1$, which accounts for the minted tokens: 
\begin{equation}
  \label{eq:net-worth-components}
  \textstyle
  W_{\pmvA}^0(\confG) = \sum_{\tokT \in \TokU[0]}
  \tokBal[\pmvA](\tokT)\cdot \exchO{\tokT}
  \qquad
  W_{\pmvA}^1(\confG) = \sum_{\tokT \in \TokU[1]}
  \tokBal[\pmvA](\tokT) \cdot \exchO[\confG]{\tokT}
\end{equation}

This provides $\pmvM$ with two levers to reduce the users' gain:
token balances, and the price of minted tokens.
To use the first lever, $\pmvM$ needs to exploit user actions in
the txpool $\TxT$ of the MEV game.
For the second lever,
since the prices of atomic tokens ($\tokT\in\TokU[0]$) are fixed,
$\pmvM$ can only influence the price of minted tokens
($\tokT\in\TokU[1]$).
This can be achieved by performing actions on the respective~AMMs.

In the rest of the~\namecref{sec:mining-solution} we
devise an \emph{optimal} strategy to exploit these two levers.
Intuitively, our strategy constructs a multi-layer \keyterm{Dagwood Sandwich}%
\footnote{We name it after Dagwood Bumstead, a comic strip character
who is often illustrated while producing enormous multi-layer sandwiches.},
containing an \keyterm{inner layer} for each exploitable user action in $\TxT$,
which $\pmvM$ \emph{front-runs} by a swap transaction
to enable it (if necessary),
and a \keyterm{final layer} of swaps by $\pmvM$ 
to minimize the prices of all minted tokens.

The construction of the final layer of the Dagwood sandwich
is shown in \S\ref{sec:price-minimization},
while the construction of the inner layers is presented in \S\ref{sec:game-soln}.

\subsection{Price minimization}
\label{sec:price-minimization}

\Cref{lma:price-minimum} below states that, in any state,
$\pmvM$ can minimize the price of a minted token by using a single swap,
at most.
In particular, this minimization can always be performed
in the final layer of the Dagwood sandwich.
%

\begin{restatable}{lemma}{priceminimum}
  \label{lma:price-minimum}
  There exists a function
  $\exchOInit^{\textit{min}}$ such that if
  \mbox{$\walM{\tokBal} \mid \confG \rightarrow^{*} \walM{\tokBali} \mid \confGi$}
  then:
  \begin{inlinelist}
  \item \mbox{$\exchO[{\confGi}]{\tokT[0],\tokT[1]} \geq \exchOInit^{\textit{min}}_{\confG}(\tokT[0],\tokT[1])$};
  \item
    there exist $\tokBalii$ and
    $\bcB$ consisting at most of a swap by $\pmvM$
    such that
    \mbox{$\walM{\tokBalii} \mid \confGi \myxrightarrow{\bcB} \walM{\_} \mid \confGii$}
    and
    $\exchO[{\confGii}]{\tokT[0],\tokT[1]} = \exchOInit^{\textit{min}}_{\confG}(\tokT[0],\tokT[1])$.
  \end{inlinelist}
\end{restatable}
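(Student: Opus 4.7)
The plan is to identify a quantity on the AMM for $(\tokT[0],\tokT[1])$ that is preserved by every transition, derive the minimum price via the AM-GM inequality, and exhibit a single swap that attains it. Writing $(\ammR[0],\ammR[1]) = \res[{\tokT[0],\tokT[1]}](\confG)$ and $s = \supply{(\tokT[0],\tokT[1])}(\confG)$, I would define
\[
  \exchOInit^{\textit{min}}_{\confG}(\tokT[0],\tokT[1])
  \; = \;
  \frac{2\sqrt{\exchO{\tokT[0]} \cdot \exchO{\tokT[1]} \cdot \ammR[0] \cdot \ammR[1]}}{s}
\]
and argue that the auxiliary quantity $\sqrt{\ammR[0]\ammR[1]}/s$ attached to the AMM $(\tokT[0],\tokT[1])$ is preserved by every transition of the system.

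I would prove this invariant by a case analysis on the semantic rules, restricted to transitions that touch the AMM $(\tokT[0],\tokT[1])$. Rule $\nrule{[Swap]}$ preserves the product $\ammR[0]\ammR[1]$ (its premise is precisely the constant-product equation) and leaves $s$ unchanged. Rule $\nrule{[Dep]}$ scales both reserves by the common factor $1+\valV[0]/\ammR[0]$ (the premise $\ammR[1]\valV[0] = \ammR[0]\valV[1]$ equates $\valV[0]/\ammR[0]$ with $\valV[1]/\ammR[1]$) and increments $s$ by $\valV = (\valV[0]/\ammR[0]) \cdot s$, so $s$ scales by the same factor; the ratio $\sqrt{\ammR[0]\ammR[1]}/s$ is therefore unchanged. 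Rule $\nrule{[Rdm]}$ is symmetric, scaling reserves and supply jointly by $1-\valV/s$. Transitions on other AMMs clearly affect neither $\res[{\tokT[0],\tokT[1]}]$ nor $\supply{(\tokT[0],\tokT[1])}$. For part (i), writing $(\ammRi[0],\ammRi[1])$ and $s'$ for the reserves and supply of $(\tokT[0],\tokT[1])$ in $\confGi$, AM-GM gives $\ammRi[0]\,\exchO{\tokT[0]} + \ammRi[1]\,\exchO{\tokT[1]} \geq 2\sqrt{\ammRi[0]\ammRi[1] \cdot \exchO{\tokT[0]} \cdot \exchO{\tokT[1]}}$; dividing by $s'$ and using the invariant $\sqrt{\ammRi[0]\ammRi[1]}/s' = \sqrt{\ammR[0]\ammR[1]}/s$ yields $\exchO[\confGi]{\tokT[0],\tokT[1]} \geq \exchOInit^{\textit{min}}_{\confG}(\tokT[0],\tokT[1])$.

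For part (ii), equality in AM-GM holds precisely when $\ammRi[0]\,\exchO{\tokT[0]} = \ammRi[1]\,\exchO{\tokT[1]}$; combined with the constraint $\ammRi[0]\ammRi[1] = \ammR[0]\ammR[1]$ (since $\pmvM$'s swap preserves the product, the target must lie on the same constant-product curve as $\confGi$), this pins down a unique target pair $(\ammR[0]^{*}, \ammR[1]^{*})$. If the reserves in $\confGi$ already satisfy the ratio condition, then $\bcB$ is empty and $\tokBalii = \tokBali$; otherwise, a single swap by $\pmvM$ in direction $d = 0$ when $\ammRi[d] < \ammR[d]^{*}$ (and $d = 1$ otherwise) drives the reserves to $(\ammR[0]^{*}, \ammR[1]^{*})$, with the required input amount determined by the constant-product equation. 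One chooses $\tokBalii$ to contain sufficiently many units of $\tokT[d]$ to fund this swap, which is unproblematic since $\tokBalii$ is a free parameter. The main delicate point will be justifying that the invariant really applies across arbitrary interleavings of actions and across AMMs that share a token type with $(\tokT[0],\tokT[1])$; this is essentially bookkeeping, but is the cornerstone of the whole argument and must be checked carefully.
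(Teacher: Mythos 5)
Your proof is correct and follows essentially the same route as the paper's: the conserved quantity $\sqrt{\ammR[0]\ammR[1]}/\supply{(\tokT[0],\tokT[1])}(\confG)$ you verify by case analysis on $\nrule{[Swap]}$, $\nrule{[Dep]}$, $\nrule{[Rdm]}$ is exactly what makes the paper's ``projected minted token price'' $\exchOInit^{R}_{\confG}$ invariant along executions, and your closed form $2\sqrt{\exchO{\tokT[0]}\cdot\exchO{\tokT[1]}\cdot\ammR[0]\cdot\ammR[1]}/s$ is precisely $\exchOInit^{R}_{\confG}$ evaluated at the paper's optimal ratio $R^{min}=\exchO{\tokT[1]}/\exchO{\tokT[0]}$. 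The only (cosmetic) difference is that you locate the minimum and its attainment via the AM--GM equality case, where the paper differentiates $\exchOInit^{R}_{\confG}$ in $R$ and checks the second derivative; both yield the same minimizing reserves and the same single realizing swap.
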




In order to construct the $\ammSwapOp$ transaction
which minimizes the price of a minted token $(\tokT[0],\tokT[1])$
in $\confG$, we need some auxiliary definitions.
For each swap direction $d \in \setenum{0,1}$,
we define the \keyterm{canonical swap values} as:
\label{eq:price-minimization-values}
\begin{align*}
  \swapVal[d]{d}{\tokT[0],\tokT[1]}{\confG}{}
  = \sqrt{\tfrac{\exchO{\tokT[1-d]}}{\exchO{\tokT[d]}} \cdot \ammR[0] \cdot \ammR[1]} - \ammR[d]
    \qquad
  \swapVal[1-d]{d}{\tokT[0],\tokT[1]}{\confG}{}
  = \frac
    {\ammR[1-d] \cdot \swapVal[d]{d}{\tokT[0],\tokT[1]}{\confG}{}}
    {\ammR[d] + \swapVal[d]{d}{\tokT[0],\tokT[1]}{\confG}{}}
\end{align*}

Intuitively, $\swapVal[d]{d}{}{}{}$
is the amount of tokens \emph{deposited} in a swap of direction $d$:
it is defined such that, after the swap, the AMM
reaches an equilibrium, where the ratio of the AMM reserves is
equal to the (inverse) ratio of the token prices.
Instead, $\swapVal[1-d]{d}{}{}{}$ is
the amount of tokens \emph{received} after the swap,
\ie it is the unique value for which the swap invariant is satisfied.

If both
$\swapVal[0]{0}{\tokT[0],\tokT[1]}{\confG}{} \leq 0$
and
$\swapVal[1]{1}{\tokT[0],\tokT[1]}{\confG}{} \leq 0$,
then the price of the minted token $(\tokT[0],\tokT[1])$
is already minimized.
Otherwise, if
\(
\swapVal[d]{d}{\tokT[0],\tokT[1]}{\confG}{} > 0
\)
for some $d$ (and there may exist at most one $d$ for which this holds),
then we define the \keyterm{price minimization transaction}
$\swapTx[d]{\tokT[0],\tokT[1]}{\confG}{}$
as:
\begin{equation}
\label{eq:price-minimization-tx}
\actAmmSwap
{\pmvM}
{d}
{\,\swapVal[0]{d}{\tokT[0],\tokT[1]}{\confG}{}}{\tokT[0]}
{\;\swapVal[1]{d}{\tokT[0],\tokT[1]}{\confG}{}}{\tokT[1] \,}
\end{equation}

\Cref{thm:0-tx-solution:greedy} constructs the
final layer of the Dagwood sandwich.
We show that this layer is the solution of
the MEV game on an empty txpool.
This is because if $\pmvM$ cannot leverage user transactions,
the solution is just to minimize the price of all minted tokens.
The solution is obtained by sequencing price minimization transactions
on all AMMs.
Since the price of a minted token
is a function of the reserves of the corresponding AMM,
this can be done in any order.

\begin{restatable}{theorem}{zerotxsolution}
  \label{thm:0-tx-solution:greedy}
  Let \
  \mbox{$\confG = \mmid_{i \in I} (\ammR[i,0]:\tokT[i,0],\ammR[i,1]:\tokT[i,1]) \mid \confG[w]$},
  where $\confG[w]$ only contains wallets.
  For all $j \in I$ and $d \in \setenum{0,1}$, let
  $v_j^d = \swapVal[d]{d}{\tokT[j,0],\tokT[j,1]}{\confG}{}$, and let:
  \[
    \tokBal[j] = \begin{cases}
      v_j^d:\tokT[j,d] & \text{if $v_j^d > 0$} \\
      0 & \text{if $v_j^0,v_j^1 \leq 0$}
    \end{cases}
    \qquad
    \bcB[j] = \begin{cases}
      \swapTx[d]{\tokT[j,0],\tokT[j,1]}{\confG}{}
      & \text{if $v_j^d > 0$} \\
      \emptyseq & \text{if $v_j^0,v_j^1 \leq 0$}
    \end{cases}
  \]
  Then, $(\tokBal[1] \cdots  \tokBal[n],\bcB[1] \cdots \bcB[n])$
  is a solution to the game $(\confG,\TxT)$ for an empty $\TxT$.
\end{restatable}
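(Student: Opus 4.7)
The plan is to reduce the game on an empty txpool to a collection of independent per-AMM price-minimization problems, and then to apply \Cref{lma:price-minimum} to each AMM separately.

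First, I would observe that when $\TxT$ is empty, every transaction in any legal move $\bcB$ is issued by $\pmvM$. Inspection of the rewriting rules for deposit, swap, and redeem shows that such a transaction modifies only $\pmvM$'s wallet and the targeted AMM, so the atomic-token balance of every other user $\pmvA$ is preserved throughout $\bcB$. Using the decomposition $W_{\pmvA} = W_{\pmvA}^0 + W_{\pmvA}^1$ from \eqref{eq:net-worth-components}, the gain of any $\pmvA \neq \pmvM$ over the move equals $W_{\pmvA}^1(\confGi) - W_{\pmvA}^1(\confG)$, i.e.\ it is determined entirely by the change in prices of the minted tokens held by $\pmvA$. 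By \Cref{lma:user-wealth-minimization}, maximizing $\pmvM$'s gain is then equivalent to minimizing $\sum_{\pmvA \neq \pmvM} W_{\pmvA}^1(\confGi)$, which in turn amounts to minimizing each minted-token price $\exchO[\confGi]{\tokT[i,0],\tokT[i,1]}$ (weighted by aggregate user holdings; the choice is irrelevant for AMMs whose minted token no user holds).

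Next, the minimizations decouple across AMMs. By definition \eqref{eq:price:minted}, the price of $(\tokT[i,0],\tokT[i,1])$ depends only on the reserves of AMM $i$, and a swap modifies only the targeted AMM; hence by \Cref{lma:price-reserve-relation}, operations on AMM $j \neq i$ leave $\exchO{\tokT[i,0],\tokT[i,1]}$ untouched. \Cref{lma:price-minimum} then guarantees, for each AMM $i$ in turn, the existence of a minimum price $\exchOInit^{\textit{min}}_{\confG}(\tokT[i,0],\tokT[i,1])$ witnessed by at most one swap. The canonical values in \eqref{eq:price-minimization-values} supply this witness: they are constructed so that after the swap $\swapTx[d]{\tokT[i,0],\tokT[i,1]}{\confG}{}$, the ratio of AMM $i$'s reserves equals the inverse price ratio $\exchO{\tokT[i,0]}/\exchO{\tokT[i,1]}$, which is precisely the equilibrium that minimizes the constant-product price function. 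When $v_i^0 \leq 0$ and $v_i^1 \leq 0$, AMM $i$ is already at equilibrium and the empty sequence is correct.

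Finally, I would verify that the concatenated move $(\tokBal[1]\cdots\tokBal[n],\, \bcB[1]\cdots\bcB[n])$ is enabled from $\walM{\tokBal[1]\cdots\tokBal[n]} \mid \confG$. Each nontrivial $\bcB[j]$ is a single swap that consumes exactly $v_j^{d_j}:\tokT[j,d_j]$ from $\pmvM$'s wallet; this amount is contained in the aggregated initial balance by construction (and coincident token types across different $j$ simply sum), while swaps only augment $\pmvM$'s holdings. Since distinct AMMs are independent, the order in which the $\bcB[j]$ are composed is immaterial and the final state has every minted-token price driven to its minimum. Combining the three observations yields both optimality and enablement.

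The main obstacle I expect is justifying rigorously that the exhibited per-AMM swap truly minimizes the price over \emph{all} reachable states of that AMM, not merely over single-swap perturbations; this is exactly the content of \Cref{lma:price-minimum}, which rests on the convex behavior of the constant-product invariant. Given that lemma, the proof of \Cref{thm:0-tx-solution:greedy} is essentially a bookkeeping argument that lifts the single-AMM optimum to the product state by independence.
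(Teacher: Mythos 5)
Your proposal follows the same route as the paper's proof: reduce the empty-txpool game to minimizing the price of every minted token (via \Cref{lma:wealth-conservation}, \Cref{lma:user-wealth-minimization} and the decomposition \eqref{eq:net-worth-components}), observe that the minimizations decouple across AMMs because each swap touches the reserves of a single pair, and invoke \Cref{lma:price-minimum} to characterize the per-pair minimum. Your treatment of the reduction step is in fact more explicit than the paper's, which compresses it into a single sentence.

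The one substantive elision is the step you dispatch with ``they are constructed so that\dots'': the paper's proof consists almost entirely of verifying that claim by computation. Concretely, one must check (i) that the canonical swap values satisfy the constant-product premise of \nrule{[Swap]}, i.e.\ that $\swapVal[1-d]{d}{\tokT[0],\tokT[1]}{\confG}{}$ is exactly the output prescribed by the invariant given input $\swapVal[d]{d}{\tokT[0],\tokT[1]}{\confG}{}$, so that the price-minimization transaction is actually enabled; and (ii) that the post-swap reserve ratio equals $\ammR[0]/\ammR[1] = \exchO{\tokT[1]}/\exchO{\tokT[0]}$, the ratio shown in the proof of \Cref{lma:price-minimum} to minimize the projected minted-token price. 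Asserting this ``by construction'' leaves the actual content of the theorem unproved, since the theorem is precisely the statement that these particular closed-form values realize the minimum of \Cref{lma:price-minimum}. Relatedly, watch the direction of the equilibrium: you state that the reserve ratio should equal $\exchO{\tokT[i,0]}/\exchO{\tokT[i,1]}$, whereas the minimizing configuration has reserves inversely proportional to prices, $\ammR[i,0]/\ammR[i,1]=\exchO{\tokT[i,1]}/\exchO{\tokT[i,0]}$; carrying out the omitted algebra would surface and correct this slip immediately.
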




\subsection{Constructing the inner layers}
\label{sec:game-soln}

Consider a solution $(\tokBal,\bcB)$ to the game
\mbox{$(\walA{\tokBal[\pmvA]}\mid\confG,\TxT)$}, and let:
\[
  \walM{\tokBal}\mid\walA{\tokBal[\pmvA]} \mid \confG
  \;\xrightarrow{\;\bcB\;}\;
  \walM{\tokBali}\mid\walA{\tokBali[\pmvA]}\mid\confGi
\]
By decomposing the net worth as in~\eqref{eq:net-worth-components},
we find that $\pmvA$'s gain for $\bcB$ is:
\begin{align}
  \nonumber
  & \gain[\walM{\tokBal}\mid\walA{\tokBal[\pmvA]}\mid\confG]{\pmvA}{\bcB}
    = W_{\pmvA}^{0}(\confGi) - W_{\pmvA}^{0}(\confG)+ W_{\pmvA}^{1}(\confGi) - W_{\pmvA}^{1}(\confG)
  \\
  \nonumber
  & = \sum_{\tokT \in \TokU[0]}
    \big(
    \tokBali[\pmvA](\tokT) - \tokBal[\pmvA](\tokT)
    \big)
    \cdot \exchO{\tokT}
    +
    \sum_{\tokT \in \TokU[1]}
    \big(
    \tokBali[\pmvA](\tokT) \cdot \exchOInit_{\confGi}(\tokT) - \tokBal[\pmvA](\tokT) \cdot \exchOInit_{\confG}(\tokT)
    \big)
    \intertext{Since $\bcB$ is a solution,
    by~\Cref{lma:price-minimum} we can replace $\exchO[\confGi]{\tokT}$
    with $\exchOInit^{\textit{min}}_{\confG}(\tokT)$:}
    \label{eq:usr-gain-solution}
  & = \sum_{\tokT \in \TokU[0]}
    \big(
    \tokBali[\pmvA](\tokT) - \tokBal[\pmvA](\tokT)
    \big)
    \cdot \exchO{\tokT}
    +
    \sum_{\tokT \in \TokU[1]}
    \big(
    \tokBali[\pmvA](\tokT) \cdot \exchOInit_{\confG}^{\textit{min}}(\tokT) - \tokBal[\pmvA](\tokT) \cdot \exchOInit_{\confG}(\tokT)
    \big)
\end{align}

Note that all token prices in \eqref{eq:usr-gain-solution}
are already defined in state $\confG$.
Thus, $\pmvA$'s gain can be minimized by considering only the effect
on the token balance $\tokBali[\pmvA]$,
which we can rewrite as
$\tokBal[\pmvA] + \Delta_0  + \Delta_1 + \cdots$
where $\Delta_i$ is the effect on user $\pmvA$'s balance induced
by the $i$'th transaction in $\bcB$: this transaction
is necessarily one initially authorized by $\pmvA$.
We will show that $\Delta_i$ is \emph{fixed} for any user transaction
when executed in an inner solution layer:
the position of an inner layer in solution $\bcB$
does not affect its optimality.



The following~\namecref{thm:n-tx-soln} states that
solutions to the MEV game can be constructed incrementally,
by layering the local solutions for each individual transaction
in the txpool.
Intuitively, we choose a transaction $\txT$ from $\TxT$,
we solve the game for $(\confG,[\txT])$,
we compute the state $\confGi$ obtained by executing this solution,
and we inductively solve the game in the $(\confGi,\TxTi)$,
where $\TxTi$ is $\TxT$ minus $\txT$.

\begin{restatable}{theorem}{ntxsolution}
  \label{thm:n-tx-soln}
  With respect to the MEV game in $(\confG,\TxT)$:
  \begin{enumerate}

  \item If $\TxT$ is empty,
    the solution is the final layer constructed for $(\confG,\emptymset)$ in
    \S\ref{sec:price-minimization}.

  \item Otherwise, if $\TxT = \msetenum{\txT} \msetcup \TxTi$,
    let $(\tokBal,\bcB)$ be the inner layer constructed for $(\confG,\msetenum{\txT})$,
    let \mbox{$\walM{\tokBal}\mid\confG \myxrightarrow{\bcB} \walM{\_}\mid\confGi$},
    and let $(\tokBali,\bcBi)$ be the solution for $(\confGi,\TxTi)$.
    Then, the solution to $(\confG,\TxT)$ is $(\tokBal+\tokBali,\bcB\bcBi)$.
  \end{enumerate}
\end{restatable}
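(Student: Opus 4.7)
The plan is induction on $|\TxT|$. The base case $\TxT = \emptymset$ is precisely \Cref{thm:0-tx-solution:greedy}. For the inductive step with $\TxT = \msetenum{\txT} \msetcup \TxTi$, I would invoke \Cref{lma:user-wealth-minimization} to recast the goal as minimizing the cumulative gain of all non-miner users along the proposed concatenated move $(\tokBal+\tokBali,\bcB\bcBi)$.

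The reduction rests on the net-worth decomposition in \Cref{eq:usr-gain-solution}. Along any solution, each non-miner $\pmvA$'s gain equals a linear functional $L_{\pmvA}$ of the final balance $\tokBali[\pmvA]$ plus a constant determined by $\confG$ alone: the coefficients of $L_{\pmvA}$ are the oracle-given atomic-token prices together with the terminal minted-token prices, the latter being forced to $\exchOInit^{\textit{min}}_{\confG}$ by \Cref{lma:price-minimum} regardless of the chosen sequence. Since the final price-minimization layer uses only $\pmvM$-swaps, it leaves all non-miner balances untouched, so the minimization collapses onto the inner layers alone. Writing the balance change $\tokBali[\pmvA] - \tokBal[\pmvA] = \sum_i \Delta_i$, where $\Delta_i$ is the contribution of the $i$-th inner-layer action authorised by $\pmvA$, linearity yields $L_{\pmvA}(\tokBali[\pmvA]) = L_{\pmvA}(\tokBal[\pmvA]) + \sum_i L_{\pmvA}(\Delta_i)$; by \Cref{lma:constant-wealth-step}, deposits and redeems contribute zero to the author's atomic-token gain, so only the swap contributions remain to be optimised.

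The key step is to decouple the optimisation transaction by transaction. Each user action $\txT[i]$ in $\bcB$ can be preceded by an arbitrary $\pmvM$-swap, giving $\pmvM$ the freedom to drive the relevant AMM reserves to whatever configuration locally minimises $L_{\pmvA}(\Delta_i)$, or to drop $\txT[i]$ altogether when no profitable front-run enables it. This is precisely the inner-layer construction producing $(\tokBal,\bcB)$ for the singleton pool $\msetenum{\txT}$ from state $\confG$. The main obstacle is justifying that this per-transaction optimum is independent of how earlier inner layers have perturbed the reserves: the resolution is that the preceding $\pmvM$-swap resets those reserves, so by \Cref{lma:price-reserve-relation} the locally achievable minimum of $L_{\pmvA}(\Delta_i)$ depends only on the oracle prices and not on the ordering. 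Consequently, after firing $(\tokBal,\bcB)$ to reach $\confGi$, the residual game $(\confGi,\TxTi)$ is optimally solved by the inductive hypothesis as $(\tokBali,\bcBi)$, and concatenation yields the claimed solution $(\tokBal+\tokBali,\bcB\bcBi)$.
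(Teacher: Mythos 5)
Your overall strategy matches the paper's: both reduce the problem via \eqref{eq:usr-gain-solution} to per-transaction effects on user balances, argue that each inner layer's effect is independent of its position in the sequence, and compose the inner layers with the final price-minimization layer. However, there is a genuine gap in the middle of your argument. You invoke \Cref{lma:constant-wealth-step} to conclude that deposits and redeems ``contribute zero \dots so only the swap contributions remain to be optimised.'' That lemma values the minted tokens a depositor receives (or a redeemer surrenders) at the \emph{current} state's price $\exchO[\confG]{\tokT[0],\tokT[1]}$, whereas in \eqref{eq:usr-gain-solution} the user's terminal minted-token holdings are valued at $\exchOInit^{\textit{min}}_{\confG}(\tokT[0],\tokT[1])$. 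Since $\exchO[\confG]{\tokT[0],\tokT[1]} \geq \exchOInit^{\textit{min}}_{\confG}(\tokT[0],\tokT[1])$ by \Cref{lma:price-minimum}, a deposit contributes \emph{negatively} to the user's gain in the full sandwich (the minted tokens received are subsequently devalued by the final layer), and a redeem contributes \emph{positively}. This wedge is exactly what justifies the deposit inner layer (front-run to enable the deposit, then include it) and the empty redeem inner layer (always drop redeems); your argument erases it, and so cannot account for two of the three inner-layer constructions that the theorem composes.

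A second, smaller issue is that your position-independence argument (``the preceding $\pmvM$-swap resets the reserves'') is too coarse. The paper's argument for deposits is that rule \nrule{[Dep]} forces the reserve ratio at the moment an enabled deposit executes to equal $\valV[0]/\valV[1]$, so by \Cref{lma:price-reserve-relation} the minted-token price at deposit time --- hence the amount $\valV$ of minted tokens the user receives --- is pinned down by the transaction's own parameters wherever the layer sits; for swaps the corresponding fact is that the front-run makes the user receive exactly the stated minimum $\valV[1-d]$, so the balance change is $(-\valV[d],+\valV[1-d])$ regardless of position. Making these two fixed-effect claims explicit is what licenses the greedy, order-agnostic composition in item 2 of the theorem, and your appeal to \Cref{lma:price-reserve-relation} is aimed at the wrong case.
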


We now describe how to define the inner layers of the Dagwood sandwich,
\ie the base case of the inductive construction given by~\Cref{thm:n-tx-soln}.
Each inner layer includes a user transaction from the txpool,
possibly front-run by $\pmvM$
such that executing the layer leads the user's net worth to a local minimum.
We define below the construction of these inner layers for each transaction type.

\mypar{Swap inner layer}

Swap actions only affect the balance of \emph{atomic} tokens.
To minimize the gain of $\pmvA$ after a swap, $\pmvM$ must make
$\pmvA$ receive exactly the \emph{minimum} amount of requested tokens.
The effect of the swap on $\pmvA$'s \emph{atomic net worth} is:
\begin{equation*}
  W_{\pmvA}^{0}(\confGi) - W_{\pmvA}^{0}(\confG)
  = - \valV[d] \cdot \exchO{\tokT[d]}
  + \valV[1-d]\cdot \exchO{\tokT[1-d]}
  \qquad
  \text{if }
  \confG \xrightarrow{\actAmmSwap{\pmvA}{d}{\valV[0]}{\tokT[0]}{\valV[1]}{\tokT[1]}} \confGi
\end{equation*}

If the change in $\pmvA$'s atomic net worth is negative,
$\pmvA$'s transaction is included in the layer.
Although this transaction minimizes $\pmvA$'s atomic net worth,
it simultaneously affects the price of the minted token $(\tokT[0],\tokT[1])$.
This is not an issue, since the final layer of the Dagwood sandwich
minimizes the prices of \emph{all} minted tokens.
Thus, the change of minted token prices due to the swap inner layer
will \emph{not} affect the user gain in the full Dagwood sandwich,
as evident from \eqref{eq:usr-gain-solution}.
Note that the amount of tokens exchanged in a swap
is chosen by the user, so the actual position of the layer
in the Dagwood sandwich is immaterial (\Cref{thm:n-tx-soln}).

We now define the transaction used by $\pmvM$ to front-run $\pmvA$'s swap,
ensuring that $\pmvA$ receives the least amount of tokens from the swap.
For \mbox{$\confG = (\ammR[0]:\tokT[0],\ammR[1]:\tokT[1]) \mid \cdots$} and
\mbox{$\txT = \actAmmSwap{\pmvA}{\swapDir[\pmvA]}{\valV[0]}{\tokT[0]}{\valV[1]}{\tokT[1]}$},
let the \keyterm{swap front-run reserves} be:
\begin{equation*}
  \begin{split}
    \resSwap[{\swapDir[\pmvA]}](\tokT[0],\tokT[1],\confG,\txT)
    & = \frac{\abs{\sqrt{\valV[0]^2\cdot\valV[1]^2 + 4\cdot\valV[0]\cdot\valV[1]\cdot\ammR[0]\cdot\ammR[1]}\;}-\valV[0]\cdot\valV[1]}{2\cdot\valV[{{1-\swapDir[\pmvA]}}]}
    \\
    \resSwap[{1-\swapDir[\pmvA]}](\tokT[0],\tokT[1],\confG,\txT)
    & = \frac{\ammR[0]\cdot\ammR[1]}{\resSwap[{\swapDir[\pmvA]}](\tokT[0],\tokT[1],\confG,\txT)}
  \end{split}
\end{equation*}

\noindent
These values define the reserves of $(\tokT[0],\tokT[1])$
in the state $\confGi$ reached from $\walM{\tokBal}\mid\confG$
with $\pmvM$'s transaction.
Intuitively, if the swap front-run reserves do \emph{not} coincide
with the reserves $\ammR[0]$, $\ammR[1]$ in $\confG$,
then $\pmvM$'s transaction is needed to enable $\pmvA$'s swap.
We define the \keyterm{swap front-run direction} $\swapDir[\pmvM]$ as:
\begin{equation*}
  \swapDir[\pmvM] = \begin{cases}
    \swapDir[\pmvA]
    & \text{if } \resSwap[{\swapDir[\pmvA]}](\tokT[0],\tokT[1],\confG,\txT) > \ammR[{\swapDir[\pmvA]}]
    \\
    1-\swapDir[\pmvA]
    & \text{if } \resSwap[{1-\swapDir[\pmvA]}](\tokT[0],\tokT[1],\confG,\txT) > \ammR[{1-\swapDir[\pmvA]}]
  \end{cases}
\end{equation*}

\noindent
We define the \keyterm{swap front-run values}
(\ie, the parameters of $\pmvM$'s swap) as:
\begin{equation}
\label{eq:canon-sf-values}
\begin{split}
  \frontSwapVal[{{\swapDir[\pmvM]}}]{}{\tokT[0],\tokT[1],\confG}{\txT}{}
  & = \begin{cases} \resSwap[{\swapDir[\pmvA]}](\tokT[0],\tokT[1],\confG,\txT)-\ammR[{\swapDir[\pmvA]}] & \text{if } \swapDir[\pmvM] = \swapDir[\pmvA] \\
    \ammR[{\swapDir[\pmvA]}] - \resSwap[{\swapDir[\pmvA]}](\tokT[0],\tokT[1],\confG,\txT) &  \text{if } \swapDir[\pmvM] = 1-\swapDir[\pmvA] \end{cases}
  \\
  \frontSwapVal[{{1-\swapDir[\pmvM]}}]{}{\tokT[0],\tokT[1],\confG}{\txT}{}
  & = \begin{cases} \ammR[{1-\swapDir[\pmvM]}]-\resSwap[{1-\swapDir[\pmvM]}](\tokT[0],\tokT[1],\confG,\txT) & \text{if } \swapDir[\pmvM] = \swapDir[\pmvA] \\
    \resSwap[{1-\swapDir[\pmvM]}](\tokT[0],\tokT[1],\confG,\txT)-\ammR[{1-\swapDir[\pmvM]}] &  \text{if } \swapDir[\pmvM] = 1-\swapDir[\pmvM] \end{cases}
\end{split}
\end{equation}

\noindent
We combine these values to craft the \keyterm{swap front-run transaction}:
\begin{align*}
  \frontSwapTx{\tokT[0],\tokT[1],\confG}{\txT}{}
  =
  \actAmmSwap
  {\pmvM}
  {\swapDir[\pmvM]}
  {\frontSwapVal[0]{}{\tokT[0],\tokT[1],\confG}{\txT}{}}{\tokT[0]}
  {\frontSwapVal[1]{}{\tokT[0],\tokT[1],\confG}{\txT}{}}{\tokT[1]}
\end{align*}

The inner layer
is included in the Dagwood sandwich
if it reduces $\pmvA$'s net worth, \ie
if $-\valV[\swapDir]\cdot  \exchO{\tokT[\swapDir]}
+ \valV[1-\swapDir]\cdot \exchO{\tokT[1-\swapDir]} < 0$.
The swap front-run transaction is omitted if the reserves in $\confG$
coincide with the swap front-run reserves.
The balance of $\pmvM$ in the (local) game solution is
$\frontSwapVal[{{\swapDir[\pmvM]}}]{}{\tokT[0],\tokT[1],\confG}{\txT}{}:\tokT[{{\swapDir[\pmvM]}}]$.
Note that, the amount of tokens exchanged by the swapping user in \eqref{eq:usr-gain-solution} is fixed by
$(-\valV[\swapDir], +\valV[1-\swapDir])$, and the effect of a swap inner layer does not depend
on its position along the Dagwood sandwich (\Cref{thm:n-tx-soln}).


\begin{example}
  \label{ex:1txpool-game-sf}
  We recast our first example in \S\ref{sec:intro}
  as a MEV game, assuming a txpool
  \mbox{$\TxT = \{ \actAmmSwapL{\pmvA}{40}{\tokT[0]}{35}{\tokT[1]} \}$}.
  The initial state is
  \mbox{$\confG = (100:\tokT[0],100:\tokT[1]) \mid\confG[w]$},
  where $\confG[w]$ is made of user wallets,
  among which $\walA{40:\tokT[0]}$,
  and $\exchO{\tokT[0]} = \exchO{\tokT[1]} = 1,000$.
  We construct the Dagwood sandwich.
  Since $\pmvA$'s swap yields a reduction in $\pmvA$'s atomic net worth,
  $35\cdot\exchO{\tokT[1]}-40\cdot\exchO{\tokT[0]} = -5,000$,
  then $\pmvA$'s transaction is included in the inner layer.
  To check if $\pmvA$'s swap must be front-run by $\pmvM$,
  we first compute the swap front-run reserves:
  \begin{align*}
    \resSwap[0](\tokT[0],\tokT[1],\txT,\confG)
    & = \frac{\sqrt{40^2\cdot35^2 +4\cdot40\cdot35 \cdot100^2}-40\cdot35}{2\cdot35} \approx 88.8
    \\
    \resSwap[1](\tokT[0],\tokT[1],\txT,\confG)
    & = \frac{100^2}{89} \approx 112.7
  \end{align*}
  Since these values
  differ from the reserves in the initial game state,
  $\pmvM$ must front-run $\pmvA$'s transaction.
  The direction $\swapDir[\pmvM]$ of $\pmvM$'s swap is $1$,
  as $\resSwap[1](\tokT[0],\tokT[1],\confG,\txT) > \ammR[1]$.
  The swap front-run values \eqref{eq:canon-sf-values} are given by:
  \begin{equation*}
    \frontSwapVal[0]{}{\tokT[0],\tokT[1],\confG}{\txT}{}
    = 100 - 88.8 \approx 11.2
    \quad
    \frontSwapVal[1]{}{\tokT[0],\tokT[1],\confG}{\txT}{}
    = 112.7 - 100 \approx 12.7
  \end{equation*}
  Therefore, the swap inner layer is made of two transactions:
  \begin{equation*}
    \actAmmSwap
    {\pmvM}
    {1}
    {11.2}{\tokT[0]}
    {12.7}{\tokT[1]}
    \quad
    \actAmmSwapL{\pmvA}{40}{\tokT[0]}{35}{\tokT[1]}
  \end{equation*}
  and $\pmvM$'s balance of the (local) game solution is
  \mbox{$12.7:\tokT[1]$}.
  To construct the final layer, we consider the state
  $\confGii = (128.8:\tokT[0],77.7:\tokT[1]) \mid \cdots$,
  shown in~\Cref{fig:swap-front-run}.

  \begin{figure}[b]
    \scalebox{1}{\parbox{\textwidth}{%
        \begin{align}
          \nonumber
          & \walM{35:\tokT[1]} \mid
            \confG =
            (100:\tokT[0],100:\tokT[1]) \mid \cdots
          \\
          \nonumber
          \xrightarrow{\frontSwapTx{\tokT[0],\tokT[1],\confG}{\txT}{}} \;
          & \walM{11.2:\tokT[0],22.3:\tokT[1]} \mid
            \confGi =
            (88.8:\tokT[0],112.7:\tokT[1]) \mid \cdots
          \\
          \nonumber
          \xrightarrow{\txT = \actAmmSwapL{\pmvA}{40}{\tokT[0]}{35}{\tokT[1]}} \;
          & \walM{11.2:\tokT[0],22.3:\tokT[1]} \mid
            \confGii =
            (128.8:\tokT[0],77.7:\tokT[1]) \mid \cdots
          \\
          \nonumber
          \xrightarrow{\swapTx{\tokT[0],\tokT[1]}{\confGii}{}} \;
          & \walM{40:\tokT[0],0:\tokT[1]} \mid
            \confGiii =
            (100:\tokT[0],100:\tokT[1]) \mid \cdots
        \end{align}}}
    \caption{A Dagwood sandwich exploiting a single user swap.}
    \label{fig:swap-front-run}
  \end{figure}

  \noindent
  In $\confGii$, the canonical swap values are given by:
  \begin{align*}
  \swapVal[0]{1}{\tokT[0],\tokT[1]}{\confGii}{}
  & = \frac
    {128.8 \cdot 22.3}
    {77.7 + 22.3} \approx 28.7
    \\
  \swapVal[1]{1}{\tokT[0],\tokT[1]}{\confGii}{}
    & = \sqrt{\tfrac{1}{1} \cdot 128.8 \cdot 77.7} - 77.7 \approx 22.3
  \end{align*}
  Since $\swapVal[1]{1}{\tokT[0],\tokT[1]}{\confGii}{} > 1$,
  the direction $d$ of the price minimization swap is $1$.
  Therefore, the final layer is made of a single swap
  on the pair $(\tokT[0],\tokT[1])$:
  \begin{equation*}
    \label{ex:1swap-0-tx-soln}
    \actAmmSwap
    {\pmvM}
    {1}
    {28.7}{\tokT[0]}
    {22.3}{\tokT[1]})
  \end{equation*}
  where $\pmvM$'s required balance is \mbox{$22.3:\tokT[1]$}.
  Summing up, the Dagwood sandwich is constructed by appending
  the final layer to the inner layer,
  and $\pmvM$'s required balance is
  \mbox{$\tokBal = 12.7:\tokT[1] + 22.3:\tokT[1] = 35:\tokT[1]$}.
  The MEV obtained by $\pmvM$ through the Dagwood sandwich is
  $(11.2 - 12.7)\cdot 1,000 + (28.7-22.3) \cdot 1,000 \approx 5,000$.
  \qed
\end{example}

\mypar{Deposit inner layer}

By \Cref{lma:constant-wealth-step}, deposits preserve the user's net worth.
Thus, executing
$\txT = \actAmmDeposit{\pmvA}{\valV[0]}{\tokT[0]}{\valV[1]}{\tokT[1]}$ in $\confG$
does not bring any gain to $\pmvA$:
\begin{equation}
  \label{eq:deposit-gain}
  \gain[{\confG}]
  {\pmvA}
  {\txT}
  = - \valV[0] \cdot \exchO{\tokT[0]} - \valV[1] \cdot \exchO{\tokT[1]}
  + \valV \cdot \exchO[\confG]{\tokT[0],\tokT[1]} = 0
\end{equation}
where $\valV$ is the amount of minted tokens $(\tokT[0],\tokT[1])$
given to $\pmvA$ upon the deposit.
By~\Cref{lma:price-minimum},
$\exchO[\confG]{\tokT[0],\tokT[1]} \geq \exchOInit^{\textit{min}}_{\confG}(\tokT[0],\tokT[1])$.
By using this inequality in \eqref{eq:deposit-gain}, we have:
\begin{align*}
  & -\valV[0] \cdot \exchO{\tokT[0]}
  -\valV[1] \cdot \exchO{\tokT[1]}
  + \valV \cdot \exchOInit^{\textit{min}}_{\confG}(\tokT[0],\tokT[1])
  \leq 0
  \\
  \iff
  & \, \valV \cdot \exchOInit^{\textit{min}}_{\confG}(\tokT[0],\tokT[1])
  \leq
  \valV[0] \cdot \exchO{\tokT[0]}
  +\valV[1] \cdot \exchO{\tokT[1]}
\end{align*}

By \eqref{eq:usr-gain-solution} it follows that
including $\txT$ in a game solution $\bcB$ reduces $\pmvA$'s net worth,
since the decrease of $\pmvA$'s net worth
in atomic tokens is \emph{not} always offset
by the increase of net worth in minted tokens.
Additionally, the minted token price $\exchO[\confG]{\tokT[0],\tokT[1]}$
in \eqref{eq:deposit-gain} when the user deposit occurs is determined by
deposit parameters $\valV[0]$, $\valV[1]$ alone:
let $\confG \rightarrow^{*} \confGi$ be such that the given user deposit
$\txT$
is \emph{enabled} in both $\confG$ and $\confGi$. By \nrule{[Dep]}, this implies
$\valV[0]/\valV[1] = \ammR[0]/\ammR[1] = \ammRi[0]/\ammRi[1]$ where
$(\ammR[0],\ammR[1]) = \res[{\tokT[0],\tokT[1]}](\confG)$ and
$(\ammRi[0],\ammRi[1]) = \res[{\tokT[0],\tokT[1]}](\confGi)$.
Then, by~\Cref{lma:price-reserve-relation},
$\exchO[\confG]{\tokT[0],\tokT[1]} = \exchO[\confGi]{\tokT[0],\tokT[1]}$,
as the reserve ratios in $\confG$ and $\confGi$ are equal.
Thus, the amount of minted tokens $\valV$
received by the depositing user in \eqref{eq:usr-gain-solution} is fixed by
$(\valV[0],\valV[1])$, and the effect of a deposit inner layer does not depend
on its position along the Dagwood sandwich (\Cref{thm:n-tx-soln}).

Similarly to the construction of the swap inner layer,
$\pmvM$ may need to front-run transaction
\mbox{$\txT = \actAmmDeposit{\pmvA}{\valV[0]}{\tokT[0]}{\valV[1]}{\tokT[1]}$}
to enable it.
For
\mbox{$\confG = (\ammR[0]:\tokT[0],\ammR[1]:\tokT[1])\mid \cdots$},
we define the \keyterm{deposit front-run reserves} as:
\begin{equation*}
  \label{eq:canon-df-reserves}
  \resDeposit[0](\tokT[0],\tokT[1],\confG,\txT)
  = \left|\sqrt{\nicefrac{\valV[0]}{\valV[1]} \cdot \ammR[0] \cdot \ammR[1]}\:\right|
  \quad
  \resDeposit[1](\tokT[0],\tokT[1],\confG,\txT)
  = \left|\sqrt{\nicefrac{\valV[1]}{\valV[0]} \cdot \ammR[0] \cdot \ammR[1]}\:\right|
\end{equation*}
which satisfy
\(
  {\resDeposit[0](\tokT[0],\tokT[1],\confG,\txT)} \cdot {\valV[1]}
  =
  {\resDeposit[1](\tokT[0],\tokT[1],\confG,\txT)} \cdot {\valV[0]}
\),
as required by $\nrule{[Dep]}$.
Given a swap direction $d_{\pmvM}$,
we define the \keyterm{deposit front-run values} as:
\begin{equation*}
  \label{eq:canon-df-values}
  \begin{split}
    \frontDepVal[{{\swapDir[\pmvM]}}]{}{\tokT[0],\tokT[1],\confG}{\txT}{}
    & = \resDeposit[{\swapDir[\pmvM]}](\tokT[0],\tokT[1],\confG,\txT) - \ammR[{\swapDir[\pmvM]}]
    \\
    \frontDepVal[{{1-\swapDir[\pmvM]}}]{}{\tokT[0],\tokT[1],\confG}{\txT}{}
    & = \ammR[{1-\swapDir[\pmvM]}] - \resDeposit[{1-\swapDir[\pmvM]}](\tokT[0],\tokT[1],\confG,\txT)
  \end{split}
\end{equation*}

\noindent
If
\(
\frontDepVal[{{\swapDir[\pmvM]}}]{}{\tokT[0],\tokT[1],\confG}{\txT}{} > 0
\)
and
\(
\frontDepVal[{{1-\swapDir[\pmvM]}}]{}{\tokT[0],\tokT[1],\confG}{\txT}{} > 0
\)
holds for a swap direction $\swapDir[\pmvM]$,
then we define the \keyterm{deposit front-run transaction} as:
\begin{align*}
  \frontDepTx{\tokT[0],\tokT[1],\confG}{\txT}{}
  & = \actAmmSwap
    {\pmvM}{{\swapDir[\pmvM]}}
    {\frontDepVal[0]{}{\tokT[0],\tokT[1],\confG}{\txT}{}}{\tokT[0]}
    {\frontDepVal[1]{}{\tokT[0],\tokT[1],\confG}{\txT}{}}{\tokT[1]}
\end{align*}

\noindent
If the reserve ratio in the initial state
does not coincide with the ratio of deposited funds,
\ie $\valV[0]/\valV[1] \not= \ammR[0]/\ammR[1]$,
then the deposit inner layer is
$\frontDepTx{\tokT[0],\tokT[1],\confG}{\txT}{} \; \txT$,
and the balance required by $\pmvM$ is
\(
  \frontDepVal[{{\swapDir[\pmvM]}}]{}{\tokT[0],\tokT[1],\confG}{\txT}{}
  :\tokT[{{\swapDir[\pmvM]}}]
\).
Otherwise, the deposit inner layer is made just by $\txT$,
and the required balance is zero.

\mypar{Redeem inner layer}

By~\Cref{lma:constant-wealth-step}, redeem actions preserve the user's net worth,
\ie $\pmvA$'s gain is zero when firing
\mbox{$\txT = \actAmmRedeem{\pmvA}{\valV:(\tokT[0],\tokT[1])}$} in $\confG$:
\begin{equation*}
  \gain[{\confG}]{\pmvA}{\txT}
  \; = \;
  - \valV \cdot \exchO[\confG]{\tokT[0],\tokT[1]}
  +\valV[0] \cdot \exchO{\tokT[0]} + \valV[1] \cdot \exchO{\tokT[1]}
  \; = \;
  0
\end{equation*}

Unlike for the deposit inner layer,
redeem transactions \emph{increase} the users' gain when executed
in the game solution.
This is apparent when substituting in the above equation
$\exchO[\confG]{\tokT[0],\tokT[1]} = \exchOInit_{\confG}^{\textit{min}}(\tokT[0],\tokT[1])$
(as per \Cref{lma:price-minimum})
to express the user gain \emph{contribution} \eqref{eq:usr-gain-solution}
of the redeem action.
\[
  - \valV \cdot
  \exchOInit_{\confG}^{\textit{min}}(\tokT[0],\tokT[1])
  + \valV[0] \cdot \exchO{\tokT[0]}
  + \valV[1] \cdot \exchO{\tokT[1]}
   \geq 0
\]
Therefore, user redeem actions always \emph{reduce} $\pmvM$'s gain,
and so they are \emph{not} included in the solution.
Therefore, the redeem inner layer is always empty.

\begin{figure}[t]
  \scalebox{1}{\parbox{\textwidth}{%
      \begin{align}
        \nonumber
        & \walM{18:\tokT[0],50.5:\tokT[1]} \mid
          \confG =
          (100:\tokT[0],100:\tokT[1]) \mid \cdots
        \\
        \nonumber
        \xrightarrow{\frontSwapTx{\tokT[0],\tokT[1],\confG}{\txT}{}}
        \; & \walM{29.3:\tokT[0],37.8:\tokT[1]} \mid
          \confGi =
          (88.8:\tokT[0],112.7:\tokT[1]) \mid \cdots
        \\
        \nonumber
        \xrightarrow{\txT = \actAmmSwapL{\pmvA}{40}{\tokT[0]}{35}{\tokT[1]}} \;
        \; & \walM{29.3:\tokT[0],37.8:\tokT[1]} \mid
          \confGii =
          (128.8:\tokT[0],77.7:\tokT[1]) \mid \cdots
        \\
        \nonumber
        \xrightarrow{\frontDepTx{\tokT[0],\tokT[1],\confGii}{\txTi}{}}
        \; & \walM{71.4:\tokT[0],0:\tokT[1]} \mid
          \confGiii =
          (86.6:\tokT[0],115.5:\tokT[1]) \mid \cdots
        \\
        \nonumber
        \xrightarrow{\txTi = \actAmmDeposit{\pmvA}{30}{\tokT[0]}{40}{\tokT[1]}}
        \; & \walM{71.4:\tokT[0],0:\tokT[1]} \mid
          \confGiiii =
          (116.6:\tokT[0],155.5:\tokT[1]) \mid \cdots
        \\
        \nonumber
        \xrightarrow{\swapTx{\tokT[0],\tokT[1]}{\confGiiii}{}}
        \; & \walM{53.4:\tokT[0],20.8:\tokT[1]} \mid
          (134.6:\tokT[0],134.6:\tokT[1]) \mid \cdots
      \end{align}}}
  \caption{A Dagwood sandwich exploiting a user swap, deposit and redeem (dropped).}
  \label{fig:full-solution}
\end{figure}

\begin{example}
  \label{ex:full-game-soln}
  We now recast the full example in \Cref{sec:intro}
  as a MEV game, considering all three user transactions in the txpool:
  \begin{align*}
    \TxT =
    \{\:
    \actAmmSwapL{\pmvA}{40}{\tokT[0]}{35}{\tokT[1] }
    \:,\:
    \actAmmDeposit{\pmvA}{30}{\tokT[1]}{40}{\tokT[1]}
    \:,\:
    \actAmmRedeem{\pmvA}{10:(\tokT[0],\tokT[1])}
    \:\}
  \end{align*}
  The game solution is shown in \Cref{fig:full-solution}: note that we can
  reuse the swap inner layer from \Cref{ex:1txpool-game-sf}, since the initial
  state and user swap action are identical.
  Thus, we continue by constructing the deposit inner layer for user deposit $\txTi$ in state $\confGii = (128.8:\tokT[0],77.7:\tokT[1])$.
  Here, the deposit front-run reserves are:
  \begin{align*}
    \resDeposit[0](\tokT[0],\tokT[1],\confGii,\txTi)
    & = \abs{\sqrt{\nicefrac{30}{40} \cdot 128.8 \cdot 77.7}} = 86.6
    \\
    \resDeposit[1](\tokT[0],\tokT[1],\confGii,\txTi)
    & = \abs{\sqrt{\nicefrac{40}{30} \cdot 128.8 \cdot 77.7}} = 115.5
  \end{align*}
  Since the ratio of the deposit front-run reserves does not coincide with
  the reserve ratio in $\confGii$ ($86.6/115.5 \not= 128.8/77.7$), the deposit
  front-run by $\pmvM$ is necessary to enable the user deposit action.
  By choosing a swap direction $d_{\pmvM} = 1$,
  we obtain the positive deposit front-run values,
  which confirm the choice of the direction:
  \begin{equation*}
    \frontDepVal[0]{}{\tokT[0],\tokT[1],\confGii}{\txTi}{}
    = 128.8 - 86.6 \approx 42.2
    \quad
    \frontDepVal[1]{}{\tokT[0],\tokT[1],\confGii}{\txTi}{}
    = 115.5 - 77.7 \approx 37.8
  \end{equation*}
  Therefore, $\pmvM$'s deposit front-run transaction is:
  \[
    \frontDepTx{\tokT[0],\tokT[1],\confGii}{\txTi}{}
    \; = \;
    \actAmmSwap
    {\pmvM}
    {1}
    {42.2}{\tokT[0]}
    {37.8}{\tokT[1]}
  \]
  which requires a balance $\tokBal(\tokT[1]) \geq 37.8$.
  The deposit inner layer is obtained by prepending
  this transaction to $\pmvA$'s deposit.
  The redeem inner layer is empty, as shown before.
  By~\eqref{eq:price-minimization-tx}, the final layer
  to minimize the price of minted tokens is:
  \begin{equation*}
    \actAmmSwap
    {\pmvM}
    {1}
    {18.0}{\tokT[0]}
    {20.8}{\tokT[1]}
  \end{equation*}

  \noindent
  Summing up, the full Dagwood sandwich
  (see also \Cref{fig:full-solution}) is:
  \[
    \frontSwapTx{\tokT[0],\tokT[1],\confG}{\txT}{}
    \;\;
    \txT
    \;\;
    \frontDepTx{\tokT[0],\tokT[1],\confGii}{\txTi}{}
    \;\;
    \txTi
    \;\;
    \swapTx{\tokT[0],\tokT[1]}{\confGiiii}{}
  \]
  which requires an initial balance
  $\tokBal = \setenum{18.0:\tokT[0], 12.7 + 37.8:\tokT[1]}$ by $\pmvM$.
  By inspection of the Dagwood sandwich execution in \Cref{fig:full-solution},
  it can be seen that the miner has obtained a gain of approximately 5,700.
  \qed
\end{example}

\section{Related work} \label{sec:related}

Daian \emph{et al.} \cite{Daian19flash}
study the effect of transaction reordering
obtained through \emph{priority gas auctions}.
These are games between users who compete to include a bundle of transactions
in the next block, bidding on transaction fees to incentivize miners
to include their own bundle.
Notably, \cite{Daian19flash} finds empirical evidence of the fact that
the gain derived from transaction reorderings in decentralized exchanges (DEX)
exceeds the gain given by block rewards and transaction fees in Ethereum.
The same work also proposes a game model of priority gas auctions,
showing a Nash equilibrium for players to take turns bidding,
compatibly with behavior observed in the wild on Ethereum.
Our mining game differs from that in~\cite{Daian19flash},
since we assume a greedy adversary wanting to maximize its gain
at the expense of all the other users, exploiting arbitrages on AMMs.

Zhou \emph{et al.} \cite{Zhou21high} provide a theoretical framework to study the front-running on AMMs.
Two sandwich heuristics are studied:
the \emph{front-run \& back-run swap} sandwich, and the novel
\emph{front-run redeem \& back-run swap and deposit}.
The swap semantics used in \cite{Zhou21high} is simplified, compared to ours,
since no minimum amount of received tokens is enforced by the AMM,
users only perform swaps and hold no minted tokens 
(depositing and swapping agents are decoupled). 
Further, extractable value from arbitrage is considered separately.
In comparison, we emphasize that we propose a solution to attack all main user action types offered by leading AMMs, thereby extracting value from user submitted
swaps and deposits.
Our model also accurately model minted tokens: their value is dynamically affected
by miner and user swaps during the execution of the attack.
Thus, our game solution extracts the maximum value in a more concrete setting,
considering the victim transactions of both aforementioned attacks in \cite{Zhou21high}, and leaving no arbitrage opportunities unexploited.

More general ordering and injection of transactions by a rational agent is generally referred to as \emph{front-running}.
Eskandari \emph{et al.}~\cite{Eskandari19sok} provide a taxonomy
for various front-running attacks in blockchain applications and networks.
This taxonomy is expanded in \cite{Qin21quantifying} with liquidations,
sandwich attacks and arbitrage actions between DEX.

Some works investigate the problem of detecting front-running attacks on public blockchains. For example, in~\cite{Qin21quantifying}, Qin \emph{et al.} introduce front-running detection heuristics which are deployed to empirically study the presence of such attacks on public DeFi applications.
On the other hand, various fair ordering schemes have been proposed to mitigate front-running or exploitation of miner-extractable value.
However, simple commit-and-reveal schemes still leak information such as account balances.
Breidenbach \emph{et al.}~\cite{breidenbach2017hydra}
propose ``submarine commitments'',
which rely on k-anonymity to prevent any leaks from user commitments.
Baum \emph{et al.}~\cite{baump2dex} introduce a order-book based
DEX which delegates the matching of orders
to an out-sourced, off-chain multi-party computation committee.
Private user orders are not revealed to other participants, such that no front-running can occur in each privately-computed order matching round.
Ciampi \emph{et al.}~\cite{Ciampi21fairmm} introduce a market maker protocol
in which the strictly sequential trade history between an off-chain
market maker and traders are verifiable as a hash-chain.
Any subsequent reordering by the AMM is publicly provable:
collateral from the market maker incentivizes honest,
fair-ordering behaviour. Such work aims to provide alternative, front-running resistant designs with AMM-like functionality. In contrast, our work is intended to formalize the behaviour of current, mainstream AMMs in the presence of a rational adversary.

The DeFi community is developing tools to enable agents to extract value
from smart contracts: \eg, flashbots~\cite{flashbots} is a project
aiming to develop Ethereum implementations which support transaction bundles:
Rather than front-running individual transactions by adjusting their fees,
an agent can communicate a sequence or \emph{bundle} of transactions to the miner,
asking its inclusion in the next block.
Our game solutions could be implemented to solve for such bundles.


\section{Conclusions}
\label{sec:conclusions}

We have addressed the problem of adversaries
extracting value from AMMs interactions to the detriment of users.
We have constructed an \emph{optimal} strategy that adversaries can use
to extract value from AMMs, focussing on the widespread class of
constant-product AMMs.
Our results apply to any adversary with the power
to reorder, drop or insert transactions:
besides miners, this includes \emph{roll-up aggregators},
like \eg Optimism and StarkWare~\cite{optimism,starkware}.
Notably, our work shows that it is possible to extract value from
\emph{all} types of AMM transactions,
while previous works focus on extracting value from token swaps, only.

In practice, value is also extracted from AMMs by
colluding mining and non-mining agents:
for the Ethereum blockchain, agents can send
\emph{transaction bundles}~\cite{flashbots} to mining pools for block inclusion,
in return for a fee.
Our technique naturally applies to this setting,
where the actions of the miner are simply replaced by actions by the agent
submitting the transaction bundle.

We now discuss the simplifying assumptions (1-8) listed in~\Cref{sec:mining-game}.
(1) User balances do not limit the order in which transactions in
the \emph{txpool} can be executed.
In practice, in some cases it would be possible to perform a sequence of actions
by exploiting the funds received from previous actions.
We leave ordering constraints imposed by limited wallet
balances for future work.
(2) The adversary holds no minted tokens prior to executing
the game solution. Yet, the adversary can exploit an
(unbounded) initial balance of atomic tokens
to acquire minted tokens as part of the game solution by performing deposits.
The optimality of the Dagwood sandwich illustrates that this is not necessary.
(3) The size of the \emph{Dagwood sandwich} is unbounded.
In practice, a typical block of transactions
will include other transactions besides those
directed to AMMs, and so the adversary can find enough space for its
sandwiches by dropping non-AMM transactions.
During times of block-congestion, a constraint on the length of the Dagwood
sandwich will apply: we conjecture that solving such an optimization
is NP-hard, and regard this as an relevant question for future work.
(4) Prices of atomic tokens are fixed for the duration of the game:
the Dagwood sandwich will need to be recomputed should prices change.
(5) AMMs only hold atomic tokens. This is common in practice,
but we note that extending the mining game to
account for arbitrary nesting of minted tokens by AMM pairs
is an interesting direction of future research.
(6) No AMM swap fees and (7) no transaction fees are modelled:
the adversary's gain resulting from the Dagwood sandwich
is an upper bound to profitability as fees tend to zero.
Yet, fees affect this gain, so they should be taken into account
to construct an optimal strategy.
Furthermore, transaction fees may make it convenient for a miner
to include user redeem transactions in the sandwich,
while these are never exploited by our strategy.
(8) Besides fees, we abstract from the intervals that users can express
to constrain the amount of tokens received upon deposits and redeems
(we only model these constraints for swaps).
This is left for future work.

In this paper we have considered AMMs which implement the
constant-product swap invariant, like \eg Uniswap and SushiSwap.
A relevant research question is how to solve the MEV game under
different swap invariants, \eg those used by Curve Finance and SushiSwap.
Uniform frameworks which address this problem have been proposed
in~\cite{engel2021composing,ammTheory}
where swap invariants are abstracted
as functions subject to a given set of constraints.

\paragraph{Acknowledgements} 
Massimo Bartoletti is partially supported by 
Conv.\ Fondazione di Sardegna \& Atenei Sardi project
F75F21001220007 \emph{ASTRID}. James Hsin-yu Chiang
is supported by the PhD School of DTU Compute.

\bibliographystyle{splncs04}
\bibliography{main}

\iftoggle{arxiv}{%
\clearpage
\appendix
\section{Proofs}
\label{sec:proofs}



\wealthconservation*
\begin{proof}
  Follows from Lemma 3 (preservation of net worth) in \cite{ammTheory}.
\end{proof}

\constantwealthstep*
\begin{proof}
  Follows from Lemma 3 (preservation of net worth) in \cite{ammTheory}.
  \bartnote{CHECK!!}
\end{proof}

\pricereserverelation*
\begin{proof}
Let the \emph{projected minted token price} of $(\tokT[0],\tokT[1])$ at
reserve ratio $R>0$ in state $\confG$ be defined as:
\[
  \exchOInit^{\textit{R}}_{\confG}(\tokT[0],\tokT[1])
  = \frac{\ammRi[0]}{\supply{\confG}(\tokT[0],\tokT[1])}
    \cdot\exchOInit(\tokT[0])
  + \frac{\ammRi[1]}{\supply{\confG}(\tokT[0],\tokT[1])}
    \cdot\exchOInit(\tokT[1])
\]
where for the \emph{projected reserves} $(\ammRi[0],\ammRi[1])$, both
$\ammRi[0]\cdot \ammRi[1] = \ammR[0]\cdot \ammR[1]$ and $R = \ammRi[0]/\ammRi[1]$
hold. Thus, the projected minted token price can be rewritten entirely in terms
of token reserves and supply in $\confG$ and \emph{projected ratio} $R$:
\begin{equation}
  \label{proof:eq:price-reserve-relation-1}
  \exchOInit^{\textit{R}}_{\confG}(\tokT[0],\tokT[1])
  = \frac{\sqrt{\ammR[0]\cdot\ammR[1]\cdot R}}{\supply{\confG}(\tokT[0],\tokT[1])}
    \cdot\exchOInit(\tokT[0])
  + \frac{\sqrt{\ammR[0]\cdot\ammR[1]/R}}{\supply{\confG}(\tokT[0],\tokT[1])}
    \cdot\exchOInit(\tokT[1])
\end{equation}
We note that from \eqref{proof:eq:price-reserve-relation-1} and
\eqref{eq:price:minted} it follows that
\begin{equation}
  \label{proof:eq:price-reserve-relation-2}
\exchOInit^{\textit{R}}_{\confG}(\tokT[0],\tokT[1]) =
\exchO[\confG]{\tokT[0],\tokT[1]}
\quad \text{if} \quad
\begin{array}{l}
\res[{\tokT[0],\tokT[1]}](\confG) = (\ammR[0],\ammR[1])\\
R = \ammR[0]/\ammR[1]
\end{array}
\end{equation}
Alternatively, the \emph{projected minted token price} in a given
state $\confG$ can be interpreted as the minted token price in $\confGi$ of
execution $\walM{\tokBal}\mid\confG \rightarrow^{\txT}\walM{\_}\mid\confGi$
where $\txT$ is a miner swap action and the reserve ratio $\ammRi[0]/\ammRi[1]=R$
holds in $\confGi$ but not in $\confG$. By definition then, there exists
$\tokBal$ and swap $\txT$ for any reachable state $\confG$ and $R>0$,
such that $\walM{\tokBal}\mid\confG \rightarrow^{\txT}\walM{\_}\mid\confGi$
and $\exchOInit^{\textit{R}}_{\confG}(\tokT[0],\tokT[1]) = \exchO[\confGi]{\tokT[0],\tokT[1]}$ if  $R \not= \ammR[0]/\ammR[1]$.\\

We prove \Cref{lma:price-reserve-relation} by showing that for any $R$, the
\emph{projected minted token price} of a pair remains \emph{constant} for any execution.
Thus, if in two states $\confG,\confGi$ along an execution the AMM pair reserve
ratios both equal $R = \ammR[0]/\ammR[1] =\ammRi[0]/\ammRi[1]$, prices must
also be equal, thereby proving the lemma.
\begin{equation}
  \label{proof:eq:price-reserve-relation-3}
  \exchOInit^{\textit{R}}_{\confG}(\tokT[0],\tokT[1]) =
  \exchOInit^{\textit{R}}_{\confGi}(\tokT[0],\tokT[1]) =
  \exchO[\confG]{\tokT[0],\tokT[1]} =
  \exchO[\confGi]{\tokT[0],\tokT[1]}
\end{equation}
\\
We prove that the \emph{projected minted token price} remains constant for any
execution by induction.

\mypar{Base case: empty} For an empty step, the projected minted token price
remains constant (trivially).

\mypar{Induction step: deposit/redeem} For a deposit or redeem execution
\mbox{$\confG[n] \rightarrow^{\txT} \confG[n+1]$} the
following must hold for $c>0$ by definition of $\nrule{[Dep]}$ and $\nrule{[Rdm]}$
\begin{equation*}
  (c\cdot\ammR[0]^{n},c\cdot\ammR[1]^{n}) = (\ammR[0]^{n+1},\ammR[1]^{n+1})
  \qquad
  c\cdot\supply{\confG[n]}(\tokT[0],\tokT[1]) = \supply{\confG[n+1]}(\tokT[0],\tokT[1])
\end{equation*}
Thus, we can write the \emph{projected minted token price} in $\confG[n+1]$ in terms
of reserves and token supply in $\confG[n]$, such that the equality is apparent.
\begin{align*}
  \exchOInit^{\textit{R}}_{\confG[n+1]}(\tokT[0],\tokT[1])
  &= \frac{\sqrt{c^2\cdot\ammR[0]^{n}\cdot\ammR[1]^{n}\cdot R}}{c\cdot\supply{\confG[n]}(\tokT[0],\tokT[1])}
    \cdot\exchOInit(\tokT[0])
  + \frac{\sqrt{c^2\cdot\ammR[0]^{n}\cdot\ammR[1]^{n}/R}}{c\cdot\supply{\confG[n]}(\tokT[0],\tokT[1])}
    \cdot\exchOInit(\tokT[1]) \\
  &= \frac{\sqrt{\ammR[0]^{n}\cdot\ammR[1]^{n}\cdot R}}{\supply{\confG[n]}(\tokT[0],\tokT[1])}
  \cdot\exchOInit(\tokT[0])
  + \frac{\sqrt{\ammR[0]^{n}\cdot\ammR[1]^{n}/R}}{\supply{\confG[n]}(\tokT[0],\tokT[1])}
  \cdot\exchOInit(\tokT[1]) = \exchOInit^{\textit{R}}_{\confG[n]}(\tokT[0],\tokT[1])
\end{align*}
\mypar{Induction step: swap} For a swap execution
\mbox{$\confG[n] \rightarrow^{\txT} \confG[n+1]$} both the supply of minted tokens
and the reserve product is maintained by definition of $\nrule{Swap}$
\[
  \ammR[0]^{n}\cdot\ammR[1]^{n} = \ammR[0]^{n+1}\cdot\ammR[1]^{n+1}
  \quad
  \supply{\confG[n]}(\tokT[0],\tokT[1]) = \supply{\confG[n+1]}(\tokT[0],\tokT[1])
\]
Again, we can express the \emph{projected minted token price} in $\confG[n+1]$ in terms
of reserves and token supply in $\confG[n]$ to illustrate the equality.
\[
  \exchOInit^{\textit{R}}_{\confG[n+1]}(\tokT[0],\tokT[1])
  = \frac{\sqrt{\ammR[0]^{n}\cdot\ammR[1]^{n}\cdot R}}{\supply{\confG[n]}(\tokT[0],\tokT[1])}
    \cdot\exchOInit(\tokT[0])
  + \frac{\sqrt{\ammR[0]^{n}\cdot\ammR[1]^{n}/R}}{\supply{\confG[n]}(\tokT[0],\tokT[1])}
    \cdot\exchOInit(\tokT[1])
  = \exchOInit^{\textit{R}}_{\confG[n]}(\tokT[0],\tokT[1])
\]
Thus, we have shown that the projected minted token price remains constant
for all executions. Therefore, \eqref{proof:eq:price-reserve-relation-3} holds,
proving the lemma.
\qed
\end{proof}

\priceminimum*
\begin{proof}{lma:price-minimum}
The proof reuses the definition of the \emph{projected minted token price}
\eqref{proof:eq:price-reserve-relation-1} defined in the proof of
\Cref{lma:price-reserve-relation}: there, we showed that the
\emph{projected minted token price} for any given reserve ratio $R>0$
remains constant for all executions.
Thus, by definition \eqref{proof:eq:price-reserve-relation-1}, the projected
minted token price in $\confG$ for all $R>0$ is the minted token price \emph{range}
which can be achieved by executing a swap in any reachable state $\confG$.

To find $\exchOInit^{min}_{\confG}(\tokT[0],\tokT[1])$, we first determine the
$R$ for which $\exchOInit^{\textit{R}}_{\confG}(\tokT[0],\tokT[1])$ is minimized
in any reachable state $\confG$.
\[
  \frac{\partial}{\partial R}
  \exchOInit^{\textit{R}}_{\confG}(\tokT[0],\tokT[1])
  = \frac{\sqrt{\ammR[0]^{n}\cdot\ammR[1]^{n}}}{2\cdot\supply{\confG[n]}(\tokT[0],\tokT[1])\cdot\sqrt{R}}
    \cdot\exchOInit(\tokT[0])
  - \frac{\sqrt{\ammR[0]^{n}\cdot\ammR[1]^{n}}}{2\cdot\supply{\confG[n]}(\tokT[0],\tokT[1])\cdot\sqrt{R}\cdot R}
    \cdot\exchOInit(\tokT[1])
\]
Setting the expression above as equal to zero and then solving for $R = R^{min}$
we obtain
\[
  R^{min} = \frac{\ammR[0]}{\ammR[1]} = \frac{\exchOInit(\tokT[1])}{\exchOInit(\tokT[0])}
\]
Further, we have determined the projected minted token price \emph{minimum} since the second derivative is positive
\[
  \frac{\partial^2}{\partial R^2}
  \exchOInit^{\textit{R}}_{\confG}(\tokT[0],\tokT[1])
  = -\frac{\sqrt{\ammR[0]^{n}\cdot\ammR[1]^{n}}}{4\cdot\supply{\confG[n]}(\tokT[0],\tokT[1])\cdot\sqrt{R}\cdot R}
  + \frac{3\cdot\sqrt{\ammR[0]^{n}\cdot\ammR[1]^{n}}}{4\cdot\supply{\confG[n]}(\tokT[0],\tokT[1])\cdot\sqrt{R}\cdot R^2}
\]
\[
  = -\frac{\sqrt{\ammR[0]^{n}\cdot\ammR[1]^{n}}}{4\cdot\supply{\confG[n]}(\tokT[0],\tokT[1])\cdot\sqrt{\frac{\exchOInit(\tokT[1])}{\exchOInit(\tokT[0])}}\cdot \frac{\exchOInit(\tokT[1])}{\exchOInit(\tokT[0])^2}}
  + \frac{3\cdot\sqrt{\ammR[0]^{n}\cdot\ammR[1]^{n}}}{4\cdot\supply{\confG[n]}(\tokT[0],\tokT[1])\cdot\sqrt{\frac{\exchOInit(\tokT[1])}{\exchOInit(\tokT[0])}}\cdot \frac{\exchOInit(\tokT[1])}{\exchOInit(\tokT[0])^2}}
\]
\[
  =  \frac{2\cdot\sqrt{\ammR[0]^{n}\cdot\ammR[1]^{n}}}{4\cdot\supply{\confG[n]}(\tokT[0],\tokT[1])\cdot\sqrt{\frac{\exchOInit(\tokT[1])}{\exchOInit(\tokT[0])}}\cdot \frac{\exchOInit(\tokT[1])}{\exchOInit(\tokT[0])^2}} > 0
\]
Thus, the function $\exchOInit^{\textit{min}}_{\confG}(\tokT[0],\tokT[1])$ is given
as
\[
  \exchOInit^{\textit{min}}_{\confG}(\tokT[0],\tokT[1])
  =\exchOInit^{\exchOInit(\tokT[1])/\exchOInit(\tokT[0])}_{\confG}(\tokT[0],\tokT[1])
\]
By definition of the \emph{project minted token price}, a swap exists such that
the projected price for reserve ratio $R$ is achieved in the resulting state if the reserve ratio in $\confG$ is not equal to $R$. Otherwise the reserve ratio must
equal $R$, and thus the empty step achieves the projected price trivially.
We have shown that $\exchOInit^{R}_{\confG}(\tokT[0],\tokT[1]) \geq \exchOInit^{\textit{min}}_{\confG}(\tokT[0],\tokT[1])$ for any $R>0$, thereby proving the lemma.
\qed
\end{proof}

\zerotxsolution*
\begin{proof} 
  \Cref{thm:0-tx-solution:greedy} states that the solution to $(\confG,\emptymset)$ can be greedily constructed from canonical swaps for each AMM pair in $\confG$, thereby minimizing the prices of all minted tokens and net worth of users whilst maximizing the gain for the miner.

  We prove the lemma by showing that the \emph{price minimization swap}
  \eqref{eq:price-minimization-tx} for a pair
  $(\tokT[0],\tokT[1])$ minimizes the respective minted token price. Since all
  AMM actions affect single pair reserves only, the miner can minimize the
  minted token price in any order, thereby proving the lemma.

  To prove that the \emph{price minimization swap} minimizes the minted token price of a pair,
  we show that it updates the pair reserve ratio to $\ammR[0]/\ammR[1] = \exchO{\tokT[1]} / \exchO{\tokT[0]}$, which, as shown in the proof of \Cref{lma:price-minimum}, minimizes the price for all executions.

  \mypar{Case: $d = 0$} We assume the canonical swap direction to be $d=0$.
  By definition of the canonical swap values
  at page~\pageref{eq:price-minimization-values}, we have:
  \[
    \swapVal[0]{0}{\tokT[0],\tokT[1]}{\confG}{}
    = \sqrt{\tfrac{\exchO{\tokT[1]}}{\exchO{\tokT[0]}} \cdot \ammR[0] \cdot \ammR[1]} - \ammR[0]
  \]
  \[
    \swapVal[1]{0}{\tokT[0],\tokT[1]}{\confG}{}
    = \frac
    {\ammR[1] \cdot \swapVal[0]{0}{\tokT[0],\tokT[1]}{\confG}{}}
    {\ammR[0] + \swapVal[0]{0}{\tokT[0],\tokT[1]}{\confG}{}}
    = \frac
    {\ammR[1] \cdot \sqrt{\tfrac{\exchO{\tokT[1]}}{\exchO{\tokT[0]}} \cdot \ammR[0] \cdot \ammR[1]} - \ammR[0]\cdot\ammR[1]}
    {\sqrt{\tfrac{\exchO{\tokT[1]}}{\exchO{\tokT[0]}} \cdot \ammR[0] \cdot \ammR[1]}}
  \]
  Further, the reserve product invariant must hold before and after the price
  minimization swap in direction $d=0$. We show that this holds:
  \[
    (\ammR[0]+\swapVal[0]{0}{\tokT[0],\tokT[1]}{\confG}{})
    \cdot
    (\ammR[1]-\swapVal[1]{0}{\tokT[0],\tokT[1]}{\confG}{})
    =
    \sqrt{\tfrac{\exchO{\tokT[1]}}{\exchO{\tokT[0]}} \cdot \ammR[0] \cdot \ammR[1]} \cdot \frac{\ammR[0]\cdot\ammR[1]}{\sqrt{\tfrac{\exchO{\tokT[1]}}{\exchO{\tokT[0]}} \cdot \ammR[0] \cdot \ammR[1]}} = \ammR[0]\cdot\ammR[1]
  \]
  Finally, we can show that the resulting reserve ratio following the price minimization
  swap is indeed $\exchO{\tokT[1]} / \exchO{\tokT[0]}$, thereby minimizing the
  minted token price (see proof of \Cref{lma:price-minimum}).
  \[
    \frac{\ammR[0]+\swapVal[0]{0}{\tokT[0],\tokT[1]}{\confG}{}}{\ammR[1]-\swapVal[1]{0}{\tokT[0],\tokT[1]}{\confG}{}} =
    \frac{\sqrt{\tfrac{\exchO{\tokT[1]}}{\exchO{\tokT[0]}} \cdot \ammR[0] \cdot \ammR[1]}}{\frac{\ammR[0]\cdot\ammR[1]}{\sqrt{\tfrac{\exchO{\tokT[1]}}{\exchO{\tokT[0]}} \cdot \ammR[0] \cdot \ammR[1]}}}
    = \frac{\tfrac{\exchO{\tokT[1]}}{\exchO{\tokT[0]}} \cdot \ammR[0] \cdot \ammR[1]}{\ammR[0]\cdot\ammR[1]} = \frac{\exchO{\tokT[1]}}{\exchO{\tokT[0]}}
  \]
  \mypar{Case: $d=1$} Follows similarly and is omitted for brevity.
  \qed
\end{proof}

\ntxsolution*
\begin{proof} 

We restate the user gain \eqref{eq:usr-gain-solution} from the execution of a
game solution following~\Cref{lma:price-minimum}.
\begin{equation*}
  \begin{array}{l}
  \gain[\walM{\tokBal}\mid\walA{\tokBal[\pmvA]}\mid\confG]{\pmvA}{\bcB}
  \\[4pt]
  = \sum_{\tokT \in \TokU[0]} \tokBali[\pmvA](\tokT) \cdot \exchO{\tokT} - \tokBal[\pmvA](\tokT) \cdot \exchO{\tokT}
      + \sum_{\tokT \in \TokU[1]}
      \tokBali[\pmvA](\tokT) \cdot \exchOInit_{\confG}^{\textit{min}}(\tokT) - \tokBal[\pmvA](\tokT) \cdot \exchOInit_{\confG}(\tokT)
  \end{array}
\end{equation*}
Here, the prices are either of atomic ($\exchOInit_{\confG}(\tokT)$),
or minted tokens ($\exchOInit_{\confG}(\tokT)$ and $\exchOInit_{\confG}^{\textit{min}}(\tokT)$), all determined in the initial state $\confG$. Thus, the exploitation
of individual user actions by the miner is decided on the action's effect
the user \emph{token balance} only.

We prove \Cref{thm:n-tx-soln} by showing that the "inner layer" for each
user action type are \emph{optimal} when constructed
in any order from the submitted user actions in $\TxT$.

\mypar{Swap-inner-layer} Firsty, we show that the \emph{swap front-run}
by the miner will always minimize the amount of tokens \emph{received} by the user.
Let $\txT=\actAmmSwap{\pmvA}{0}{\valV[0]}{\tokT[0]}{\valV[1]}{\tokT[1]}$
where $\swapDir[\pmvA] = 0$ and
\[
  \walM{\_}\mid\confG \xrightarrow{\frontSwapTx{\tokT[0]}{\tokT[0]}{\confG,\txT}} \walM{\_}\mid\confGi
  \xrightarrow{\txT} \walM{\_}\mid\confGii
\]
If the execution of user swap $\txT$ results in the \emph{minimal} received output amount
$\valV[1]$ for $\pmvA$, then for $\res[{\tokT[0],\tokT[1]}](\confG) = (\ammR[0],\ammR[1])$, $\res[{\tokT[0],\tokT[1]}](\confGi) = (\ammRi[0],\ammRi[1])$ and $\res[{\tokT[0],\tokT[1]}](\confGii) = (\ammRi[0]+\valV[0],\ammRi[1]-\valV[1])$ the reserve product
invariant must hold by definition of $\nrule{[Swap]}$.
\[
  (\ammRi[0]+\valV[0])\cdot(\ammRi[1]-\valV[1]) = \ammRi[0]\cdot\ammRi[1] =
  \ammR[0]\cdot\ammR[1]
\]
Solving for $\ammRi[0]$, we can rewrite as:
\begin{equation*}
\begin{array}{c}
  (\ammRi[0]+\valV[0])\cdot(\frac{\ammR[0]\cdot\ammR[1]}{\ammRi[0]}-\valV[1]) = \ammR[0]\cdot\ammR[1]
\\
\ammR[0]\cdot\ammR[1] - \valV[0]\cdot\ammRi[0] + \frac{\valV[0]\cdot\ammR[0]\cdot\ammR[1]}{\ammRi[0]} -\valV[0]\cdot\valV[1] = \ammR[0]\cdot\ammR[1]
\\
\valV[1]\cdot{\ammRi[0]}^{2} + \valV[0]\cdot\valV[1]\cdot\ammRi[0]-\valV[0]\cdot\ammR[0]\cdot\ammR[1] = 0
\end{array}
\end{equation*}
The determinant to the quadratic equation is
\[
D = \valV[0]^2\cdot\valV[1]^2 + 4 \cdot \valV[0]\cdot \valV[1] \cdot \ammR[0]\cdot\ammR[1]
\]
Thus we can solve for \emph{positive} reserves $\ammRi[0],\ammRi[1]$ in state
$\confGi$ expressed in terms of
the swap parameters ($\valV[0],\valV[1]$) and reserves $\ammR[0],\ammR[1]$ in
initial state $\confG$,
which coincide with definitions of the \emph{swap front-run reserves}
for $\swapDir[\pmvA] = 0$
(the case $\swapDir[\pmvA] = 1$ is omitted for brevity).
\[
\ammRi[0] = \frac{-\valV[0]\cdot\valV[1] + \sqrt{\valV[0]^2\cdot\valV[1]^2 + 4 \cdot \valV[0]\cdot \valV[1] \cdot \ammR[0]\cdot\ammR[1]} }{2\cdot\valV[1]} \qquad
\ammRi[1] = \frac{\ammR[0]\cdot\ammR[1]}{\ammRi[0]}
\]
If $(\ammR[0],\ammR[1])=(\ammRi[0],\ammRi[1])$, then clearly no swap front-run
is required. Otherwise, the direction of the swap front-run depends on the value
of $\ammRi[0],\ammRi[1]$. For $\ammRi[0] > \ammR[0]$ and $\ammRi[1] < \ammR[0]$,
the swap-front run direction $\swapDir[\pmvM] = 0$ is implied. For $\ammRi[0] > \ammR[0]$ and $\ammRi[1] < \ammR[0]$, $\swapDir[\pmvM] = 1$. The \emph{swap front-run values}
\eqref{eq:canon-sf-values} follow from the difference between initial
and \emph{swap front-run} reserves.

Since the swap front-run always enables the user swap such that
the the minimum output amount is returned, this implies that the \emph{effect}
on the user token balance when executing the solution \eqref{eq:usr-gain-solution}
is solely determined by user swap parameters $(\valV[0],\valV[1])$: it is not
affected by its position in the full solution, enabling the greedy construction
of the swap-inner-layer in \Cref{thm:n-tx-soln}.

The optimality of the swap-inner-layer can be easily shown: For our assumed
user swap direction $\swapDir[\pmvA] = 0$, if
$- \valV[0]\cdot  \exchO{\tokT[0]} + \valV[1]\cdot \exchO{\tokT[1]} < 0$ holds,
then the contribution to the user gain \eqref{eq:usr-gain-solution} must be
negative, and furthermore, since by definition of $\nrule{[Swap]}$,
$\valV[1]$ is the minimum amount the user can receive,
the swap-inner-layer must be optimal.

If $- \valV[0]\cdot  \exchO{\tokT[0]} + \valV[1]\cdot \exchO{\tokT[1]} \geq 0$,
then the swap-inner-layer will be $(0,\emptyseq)$, since there the user
swap can never reduce the user gain in any game solution. We omit the case
$\swapDir[\pmvA] = 1$ for brevity.

\mypar{Deposit-inner-layer}
The optimality of the deposit-inner-layer
follows from \Cref{sec:game-soln}.
\mypar{Redeem-inner-layer}
The optimality of the redeem-inner-layer $(0,\emptyseq)$ follows from
\Cref{sec:game-soln}.
\qed
\end{proof}

}
{}

\end{document}